    % SIAM Article Template
\documentclass[review,onefignum,onetabnum]{siamart171218}

% Information that is shared between the article and the supplement
% (title and author information, macros, packages, etc.) goes into
% ex_shared.tex. If there is no supplement, this file can be included
% directly.
%\usepackage{xr}
% SIAM Shared Information Template
% This is information that is shared between the main document and any
% supplement. If no supplement is required, then this information can
% be included directly in the main document.

% Packages and macros go here
\usepackage{lipsum}
\usepackage{amsfonts}
\usepackage{graphicx}
\usepackage{epstopdf}
\usepackage{algorithmic}
\usepackage{booktabs}

\usepackage{color}

\usepackage{cite}
\ifpdf
  \DeclareGraphicsExtensions{.eps,.pdf,.png,.jpg}
\else
  \DeclareGraphicsExtensions{.eps}
\fi

% Add a serial/Oxford comma by default.

% Used for creating new theorem and remark environments
\newsiamremark{remark}{Remark}
\newsiamremark{hypothesis}{Hypothesis}
\crefname{hypothesis}{Hypothesis}{Hypotheses}
\newsiamthm{claim}{Claim}

% Sets running headers as well as PDF title and authors
\headers{Efficient algorithms for computing FRC}{D. Barros de Souza {\it et al.}}

% Title. If the supplement option is on, then "Supplementary Material"
% is automatically inserted before the title.
\title{Efficient set-theoretic algorithms for computing high-order Forman-Ricci curvature on abstract simplicial complexes\thanks{
This manuscript is for review purposes only.
\funding{This research is supported by the Basque Government through the BERC 2022-2025 program and by the Ministry of Science and Innovation: BCAM Severo Ochoa accreditation CEX2021-001142-S / MICIN / AEI / 10.13039/501100011033. Moreover, this research is financially supported by the IKUR Strategy under the collaboration agreement between the Ikerbasque Foundation and BCAM on behalf of the Department of Education of the Basque Government}}}

% Authors: full names plus addresses.
\author{Danillo Barros de Souza\thanks{Basque Center for Applied Mathematics 
  (\email{danillo.dbs16@gmail.com}, \email{dbarros@bcamath.org}, \url{http://www.bcamath.org/en/people/dbarros}).}
  \and Jonatas T. S. da Cunha \thanks{Universidade Federal de Pernambuco, Recife, Brazil, (\email{jteodomirosc@gmail.com}, \email{jonatas.teodomiro@ufpe.br})}
\and Fernando A. N. Santos \thanks{Dutch Institute for Emergent Phenomena (DIEP), Institute for Advanced Studies, University of Amsterdam, Oude Turfmarkt 147, 1012 GC, Amsterdam, The Netherlands, and Korteweg de Vries Institute for Mathematics, University of Amsterdam, Science Park 105-107, 1098 XG Amsterdam, the Netherlands
  (\email{f.a.nobregasantos@uva.nl})}
  \and Jürgen Jost \thanks{Max Planck Institute for Mathematics in the Sciences, Leipzig,  and Center for Scalable Data Analytics and Artificial Intelligence, Leipzig University, Germany,  and Santa Fe Institute, New Mexico, USA (\email{jost@mis.mpg.de})}
   \and Serafim Rodrigues \thanks{Basque Center for Applied Mathematics, Bilbao, Spain (\email{srodrigues@bcamath.org})}}

\usepackage{amsopn}

%% Added on Overleaf: enabling xr
\makeatletter
\newcommand*{\addFileDependency}[1]{% argument=file name and extension
  \typeout{(#1)}% latexmk will find this if $recorder=0 (however, in that case, it will ignore #1 if it is a .aux or .pdf file etc and it exists! if it doesn't exist, it will appear in the list of dependents regardless)
  \@addtofilelist{#1}% if you want it to appear in \listfiles, not really necessary and latexmk doesn't use this
  \IfFileExists{#1}{}{\typeout{No file #1.}}% latexmk will find this message if #1 doesn't exist (yet)
}
\makeatother

%%% END HELPER CODE
%%% Local Variables: 
%%% mode:latex
%%% TeX-master: "ex_article"
%%% End: 

\newtheorem{exmp}{Example}[section]
%\myexternaldocument{suplementary}
%\theoremstyle{definition}
%\newtheorem{exmp}{Example}[section]
% Optional PDF information
\ifpdf
\hypersetup{
  pdftitle={Efficient set-theoretic algorithms for high-order Forman-Ricci curvature on abstract simplicial complexes},
  pdfauthor={D. Barros de Souza, J. T. S. da Cunha, F. A. N. Santos Jürgen Jost and S. Rodrigues}
}
\fi
\nolinenumbers
%
% The next statement enables references to information in the
% supplement. See the xr-hyperref package for details.

%% Use \myexternaldocument on Overleaf
%\myexternaldocument{ex_supplement}

% FundRef data to be entered by SIAM
%<funding-group>
%<award-group>
%<funding-source>
%<named-content content-type="funder-name"> 
%</named-content> 
%<named-content content-type="funder-identifier"> 
%</named-content>
%</funding-source>
%<award-id> </award-id>
%</award-group>
%</funding-group>
% comments SR
%\newcommand{\SR}[1]{\textcolor{red}{#1}}
%\newcommand{\SRcom}[1]{\textcolor{red}{[\texttt{SR:#1}]}}
%
\usepackage{color,soul}

\begin{document}
\maketitle

\begin{abstract}
Forman-Ricci curvature (FRC) is a potent and powerful tool for analysing empirical networks, as the distribution of the curvature values can identify structural information that is not readily detected by other geometrical methods. Crucially, FRC captures higher-order structural information of clique complexes of a graph or Vietoris-Rips complexes, which is not readily accessible to alternative methods. However, existing FRC platforms are prohibitively computationally expensive. Therefore, herein we develop an efficient set-theoretic formulation for computing such high-order FRC in simplicial complexes. Significantly, our set theory representation reveals previous computational bottlenecks and also accelerates the computation of FRC. Finally, We provide a pseudo-code, a software implementation coined FastForman, as well as a benchmark comparison with alternative implementations. We envisage that FastForman will be used in Topological and Geometrical Data analysis for high-dimensional complex data sets. Moreover, our development paves the way for future generalisations towards efficient computations of FRC on cell complexes.
\end{abstract}
%%%%%%%%%%%%%%%%%%%%%%%%%%%

\begin{keywords}
  Forman-Ricci curvature, discrete geometry, set theory, optimization, complex systems, higher-order networks, data science.
\end{keywords}

% REQUIRED
\begin{AMS}
  05C85, %  	Graph algorithms (graph-theoretic aspects) 
  52C99, %Discrete Geometry
   %68R10, %Graph theory (including graph drawing) in computer science
   90C35, %Programming involving graphs or networks
    62R40, %Topological data analysis
  % 68W99, %Computer science
   68T09.%Computational aspects of data analysis and big data
   
\end{AMS}

%%%%%%%%%% Insert the texts which can accomdate on firstpage in the tag "fmtext" %%%%%
\section{Introduction}
Network analysis is one of the success stories of complex systems research. The basic is simple, to represent the network as a graph whose mathematical features can then be studied. Beyond the determination of clusters or the identification of particularly important vertices or edges, also certain statistical features can reveal valuable structural information. Among those, the so-called discrete Ricci curvatures (a name derived from their conceptual origin in Riemannian geometry) are particularly useful. And among those curvatures, the Forman Ricci curvature (FRC) is particularly easy to compute. But this comes at the expense of ignoring valuable information that is contained in higher-order relations between the vertices.  To uncover such information, it is necessary to go to simplicial complexes, in the case at hand the clique complex of the graph. The topological aspects of such higher-order information are evaluated in Topological Data Analysis (TDA). There, with a scale parameter $r$, one constructs a graph by connecting pairs of data points whose distance is $\le r$, and one then computes the homology, and in particular, the Betti numbers,  of the resulting clique complex, called the Vietoris-Rips complex, as depending on $r$. This by now is a well-established method. But in line with the preceding, we want to extract geometric information, and as suggested, this should be done by the augmented FRC of such a complex. %\\
Now, these clique or Vietoris-Rips complexes are not arbitrary simplicial complexes but enjoy the special property that every simplex is automatically filled as soon as all its edges, that is, its 1-skeleton are present. \\
However, while the FRC of a graph is very easy to compute, this is no longer so for the augmented FRC of a simplicial complex, at least with standard methods. Therefore, in this paper, we provide a systematic mathematical approach within which the FRC of such a Vietoris-Rips type complex can be easily and quickly computed.
 As the name indicates, FRC was originally introduced by Forman \cite{forman2003bochner}. It was inspired by Bochner's formula for Ricci tensors in Riemannian manifolds \cite{bochner1946vector}. In fact, \cite{forman2003bochner} provides a formula for FRC of a cell complex  (also called CW complex). 
  This approach yields detailed structural insights surpassing those offered by traditional topological data analysis techniques such as persistent diagrams. Despite the broader theoretical applicability of FRC on CW complexes, our approach using set theory specifically applies to computing FRC on simplicial complexes. This restricts the domain of validity of our work, however,  this choice is still ubiquitous across various disciplines due to its simplicity and feasibility and also remains critically important for in-depth topological data analysis. 
 The graph version has already found a wide range of applications. For instance, it has been applied in cancer diagnostics \cite{sandhu2015graph}, detecting anomalies on brain networks \cite{chatterjee2021detecting}, detection of dynamic changes on data sets \cite{weber2016forman}, detection of stock market fragility \cite{sandhu2016ricci} and to infer pandemic states \cite{Barros_de_Souza_2021}.
 But in those and other cases, usually only the FRC of a graph was computed. To reveal the full power of the scheme, and to systematically compare the results with those obtained by TDA, we need to compute the FRC of such simplicial complexes.
But in general, the time and memory complexity of computing higher-order dimensional information from complex networks is prohibitively expensive, which often limits applications. For example, the time for computing all cliques in a graph has exponential-time complexity \cite{dieu1986average,fellows2009clique,peeters2003maximum}.
Also, the Betti number computation required for TDA is an NP-hard problem \cite{edelsbrunner2014computational}, due to the matrix construction of boundary operators. Indeed, Betti number computation increases with the number of cliques and, consequently, requires memory-efficient algorithms. Likewise, the lack of efficient algorithms for computing generalized geometric structures has limited the use of high-order Forman-Ricci curvature (FRC) for simplicial complexes.
Set-theoretic approaches \cite{enderton1977elements,jech2003set,zimmermann1985applications,pawlak2002rough} may offer a solution to this problem, and this is the approach of the present paper. 
Despite not being commonly used, set-theoretical approaches were first applied to surfaces and posets \cite{bloch2014combinatorial} and further to simplicial complexes \cite{yadav2022poset}.
To implement and apply this, we make an important observation that when the complex is of Vietoris-Rips type, that is, all simplices are filled when their 1-skeleton is present, this leads to considerable computational simplifications. And, incidentally, this is also a reason why   TDA is such a powerful scheme in applications. Therefore, our assumption seems natural, and we then want to build an algorithm upon it. To this end, we developed a set-theoretic representation theory for higher-order network structures based on node neighbourhoods to give an alternative (and optimal) definition for FRC in terms of the classic node neighbourhood of a graph. This leads us to a representation theory that reveals previous computational bottlenecks and enables us to develop an efficient algorithm that markedly accelerates the computation of FRC in higher-order networks.
{We have implemented our algorithm - FastForman (see in \cite{kaggle_FRC_report}) - and benchmarked it with other implementations, namely GeneralisedFormanRicci \cite{generalisedformanricci} and HodgeLaplacians \cite{hodgelaplacians}. We found that our implementation boosted time processing and drastically reduced memory consumption during the process of computing FRC for low and high-dimensional network structures. These results reinforce our beliefs in the power of set-theoretic approaches to improve the efficiency of algorithm implementations for high-order geometric approaches to big data sets.} Moreover, this new formulation paves the way for the extension of the set-theoretical formulation of FRC for cell complexes, which will be crucial for the investigation of intrinsic phenomena that are present in real datasets and that are not properly explored in simplicial complexes.
%

%\maketitle
\section{Network and higher-order network fundamentals}\label{sec:network_background}
An \textit{undirected simple graph} (or network) $G=(V,E)$ is defined by a finite set of nodes $V$ and \textit{a set of edges} $E=\{(x,y)\,|\, x,y\in V, x\neq y\}$. The \cref{ex:simple_graph} depicts the definition of a simple graph. 

Let $x,y \in V$. We say that $y$ is a \textit{neighbour} of $x$ if the edge $(x,y) \in E$. The \emph{neighbourhood} of the node $x$ is defined by a set of nodes that are connected to $x$ via an edge in $E$, and we denote by $\pi_x$. Formally,
\begin{equation}\label{eq:node_neighborhood}
    \pi_x=\{y\in V\,|\, (x,y) \in E \},
\end{equation}
%where $x\sim y$ denotes that $x$ is connected to $y$ via an edge in $E$. 
See \cref{ex:node_neigh} for elucidation.
We will define the \textit{Graph neighbourhood} as the sequence of node neighbourhoods of a graph $G$. Formally,
\begin{equation}\label{eq:Graph_neigh}
    \text{Neigh(G)}=(\pi_x\,|\,x\in V)
\end{equation}
An easy and efficient implementation for finding node neighbourhood is provided in \cref{alg:Neig}.
\begin{algorithm}
\caption{Compute Graph Neighborhood{: \text{Neigh}(G)}}
\label{alg:Neig}
\begin{algorithmic}
\STATE{Input: $G=(V,E)$}
\STATE{$\pi_x=\emptyset,\forall x \in V$}
\FOR{$e= (x,y)\in E$}
\STATE{$\pi_x := \pi_x \cup \{y\}$}
\STATE{$\pi_y := \pi_y \cup \{x\}$}
\ENDFOR
\RETURN $\pi_x$ for all $x \in V$
\end{algorithmic}
\end{algorithm}
\subsection{Abstract simplicial complex}
The concepts of this section are as defined from \cite{zomorodian2005topology,zomorodian2010fast,edelsbrunner2022computational}. Let $C$  be a set together with a collection $\mathcal{S}.$  We say that $C$ is an abstract simplicial complex  if the following conditions are satisfied:
\begin{enumerate}
    \item For each $v \in C$, $\{v\} \in \mathcal{S};$
    \item If $\gamma \subseteq \alpha$ and  $\alpha \in \mathcal{S}$, then $\gamma \in \mathcal{S}$.
\end{enumerate}
This definition follows a geometrical perspective and is employed in a variety of works\cite{samal2018comparative,saucan2019discrete,saucan2021simple,yadav2023discrete}. For convenience, we define these elements with a notation which is in harmony with discrete geometric approaches applied to simplicial complexes. To this end, we denote \textit{faces} to refer to the elements that build the simplicial complex $\mathcal{S}$.
We say that $\alpha$ are d-faces (or equivalently $(d+1)$-simplex - as typically defined in topological data analysis) if $|\alpha|=d+1.$ When $\gamma\subseteq \alpha$, we say that $\gamma$ is a \textit{coface} of $\alpha$. In our approach, in the special cases where $|\alpha|$=$|\gamma|$+1, we will denote $\gamma<\alpha$, or alternatively, $\alpha>\gamma$. We denote the set of all $d$-faces by $C_d$, and we can clearly reconstruct $C$ by taking $C=\bigcup\limits^{}_{d} C_{d}.$ 
Let $G=(V,E)$ be a simple undirected graph. The collection $\mathcal{C}$ of all subsets of $\{x_0,x_1,\hdots,x_d\}\subseteq V$ such that the vertices span a face of $C$ is called a \textit{vertex scheme} of $C$. Clearly, $\mathcal{C}$ is a simplicial complex and can be identified to $C$ via isomorphism \cite{munkres2018elements}. This result allows us to identify a face $\alpha \in C_d$ with the a subset $\{x_0,\hdots,x_d\}\subseteq V.$ Also, any combinations of subsets with size $k$ span a $(k-1)$-face in $C_{k-1}$, for $k\in \{1,\hdots,d\}.$
We refer the reader to \cref{fig:CW_example} and accompanying \cref{ex:simplicial_complex} for a graphic exposition of the concepts of simplicial complex and $d$-faces.
\begin{figure}[htb!]
    \centering
    \includegraphics[width=\linewidth]{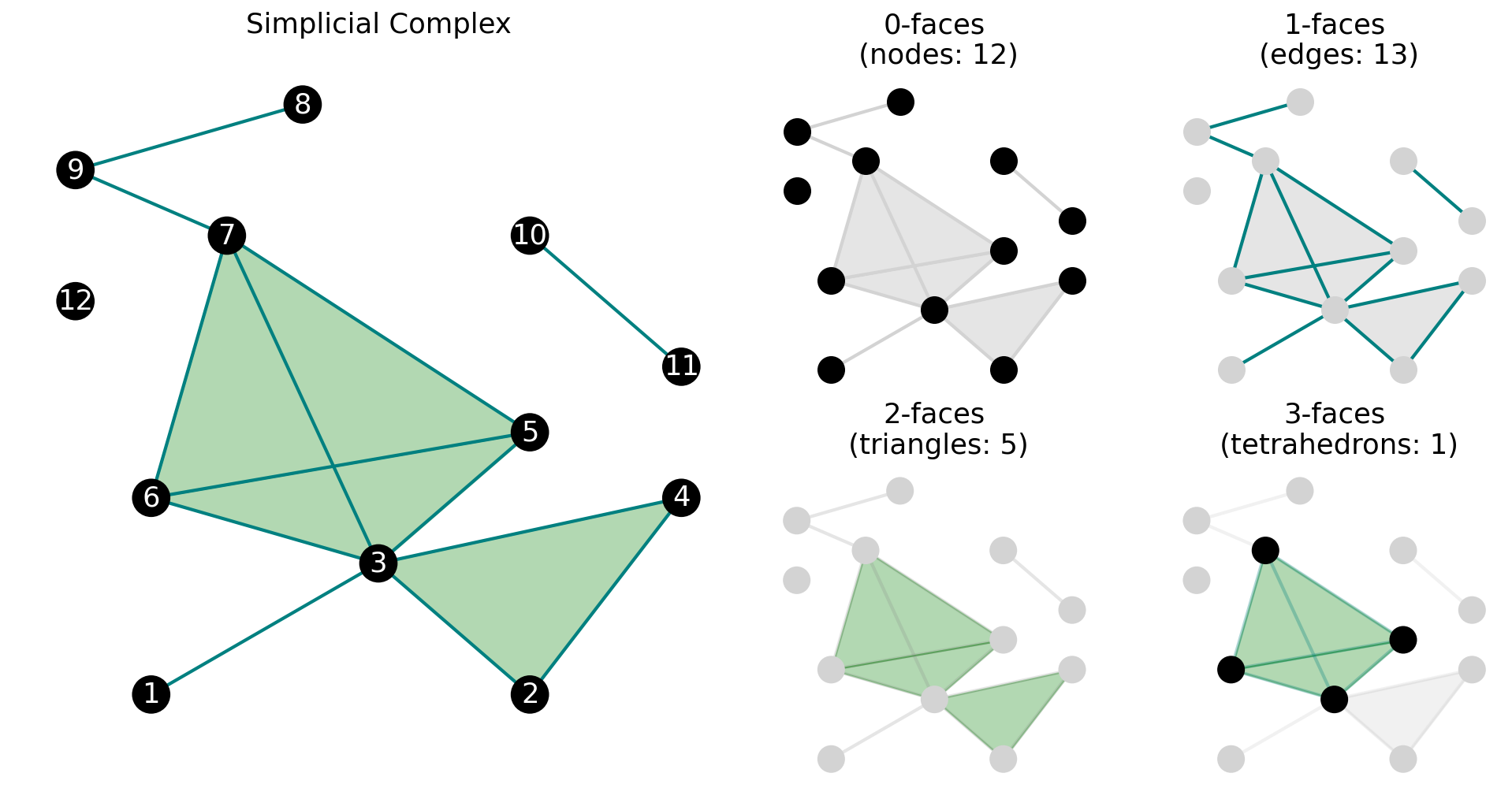}
    \caption{Example of a simplicial complex and its faces for $d\in \{0,1,2,3\}$.  In this example, we have $12$ nodes ($0$-faces), $13$ edges ($1$-faces), $5$ triangles ($2$-faces) and $1$ tetrahedron ($3$-faces).}
    \label{fig:CW_example}
\end{figure}
Analogous to definition \eqref{eq:node_neighborhood}, we define  the \textit{node} \textit{neighborhood of a $d$-face} $\alpha$ (or 
 the \textit{neighbouring of nodes of a $d$-face}) as follows: 
\begin{equation}\label{eq:cell_neighborhood}
    \pi_\alpha=\bigcap_{x \in \alpha}\pi_x.
\end{equation}
An application of this formulation can be accessed in \cref{ex:node_neigh_of_cell}.
Note that for $d=1$ in equation \eqref{eq:cell_neighborhood} we reach back to the classical formula of the neighbourhood of a node described by equation \eqref{eq:node_neighborhood}.
%
%
%
%
%
%%------------------------------------------------------------------------------------
%% Section
\section{High-order Forman-Ricci curvature}\label{sec:Forman_ricci_curvature}
The \textit{combinatorial Forman-Ricci curvature} (FRC) is originally defined in a more general formulation for cell complexes \cite{weber2017characterizing,saucan2018forman}, where the concept of neighbourhood and the parallelism is broader than in the context of simplicial complexes. Therefore, in this section, we develop all the concepts that are needed to define curvature restricted to simplicial complexes.
Let $G=(V,E)$ a graph and $d \in \mathbb{N}\bigcup \{0\}$.
Let  $\alpha \in C_{d}$. 
We define the \textit{boundary} of $\alpha$ by $\partial (\alpha)=\{\gamma \in C_{d-1},\,\gamma< \alpha\}$, for $d>0$ and $\partial(\alpha)=\emptyset$ for $d=0$.
Alternatively, we can define $\partial(\alpha)=\{\{x_0,\hdots,\hat{x_i},\hdots,x_d\},x_i\in \alpha\}=\{\gamma\subset\alpha\,,\, |\gamma|=d\}$,  where $\hat{x_i}$ denotes the removal of element $x_i$.
Note that $\partial(\alpha)$ has exactly $\binom{d+1}{d}=d+1$ elements. In \cref{ex:boundary_of_cell} the concept of face boundary is elucidated.
We say that $\alpha_1,\alpha_2 \in C_{d}$ are \textit{neighbours} if at least one of the following \textit{face neighborhood conditions} is satisfied:
 
\begin{enumerate}
    \item There exists a ($d-1$)-face $\gamma$ such that $\gamma<\alpha_1,\alpha_2$;
     \item There exists a ($d+1$)-face $\beta$ such that $\alpha_1,\alpha_2<\beta$.
  \end{enumerate} 
    We say that $\alpha_1$ and $\alpha_2$ are \textit{parallel neighbours} if either condition (i) or (ii) is reached, but not both simultaneously, and we denote the parallelism by $\alpha_1//\alpha_2$. In case of both conditions, (i) and (ii) are satisfied, $\alpha_1$ and $\alpha_2$ are said \textit{transverse neighbours}. {This neighbourhood condition {is exactly the same as originally defined for} cell complexes. However, once restricted to simplicial complexes, it can be simplified {since it is sufficient for} two faces to have a common coface {and thus they can be neighbours. This observation is proven in \cref{sec:demonstrations} (i.e., it is sufficient to satisfy} condition (i) so we have $\alpha_1$ and $\alpha_2$ as neighbours). As a consequence, it is enough to decide if condition (ii) is reached, which leads to the fact that these faces are either transverse or parallel. More precisely, in the context of simplicial complexes, we can say that $\alpha_1$  and $\alpha_2$ are parallel neighbours if condition (i) is reached, but not (ii). Otherwise, if both (i) and (ii) are reached, they are transverse neighbours. To make the development of our work more intuitive, we will use this definition for parallel and transverse neighbours henceforth.} The \cref{ex:neigh_exemple} elucidates the concept of face neighbourhood.
Let $\alpha \in C_{d}.$ We denote the set of all neighbours of a face $\alpha$ by $N_{\alpha}$. The sets of parallel and transverse neighbours of $\alpha$ will be denoted by $P_{\alpha}$ and $T_{\alpha}$, respectively. We will also denote by $H_{\alpha}$ as the set of all $(d+1)$-faces that have $\alpha$ in its boundary, i.e, all $\beta \in C_{d+1}$, such that $\beta>\alpha$.
 From conditions $1.$ and $2.$,  it is clear that these sets are disjoint, and we can also write 
    \begin{eqnarray}
    \label{eq:all_cells}
       N_{\alpha}=P_{\alpha}\sqcup T_{\alpha},
    \end{eqnarray}
for all $d\geq1$ and all $\alpha \in C_d.$
\begin{remark}
    We emphasize that the concepts of node neighbourhood of a face and face neighbourhood are strongly related, however, they describe completely different objects. While the node neighbourhood of a face describes the set of nodes that have a common intersection with all nodes of the face in question (as defined in \eqref{eq:node_neighborhood} and \eqref{eq:cell_neighborhood}), the face neighbourhood are all cells in the simplicial complex that reach at least one of the face neighbourhood conditions. A comparison between \cref{ex:node_neigh} and \cref{ex:node_neigh_of_cell} can elucidate the explanation of this difference. The link between the two concepts occurs from the fact that the neighbours of a face $\alpha$ can be described in terms of the node neighbourhood of the cofaces in its boundary, which will be clarified along the construction of our set-theoretic definitions. 
\end{remark}
 The \textit{$d$-th Forman-Ricci curvature} (FRC) is restricted to simplicial complexes from \cite{weber2017characterizing,saucan2018forman} and then given as follows:
     \begin{eqnarray}\label{eq:forman_d_cells}
    \mathrm{F}_{d}(\alpha)=&\left|\{ \beta \in C_{d+1},\,\, \beta > \alpha\}\right| \nonumber \\ & +\left|\{\gamma \in C_{d-1}, \,\, \gamma<\alpha\}\right| \nonumber \\&   -\left|\{\alpha'\in C_{d},\,\,\alpha'//\alpha \}\right|,
\end{eqnarray}
where $|.|$ is the number of elements in the set (i.e. cardinality). Alternatively, it can be formulated as follows:

\begin{eqnarray}
\label{eq:forman_ricci_set_theory}
   \mathrm{F}_{d}(\alpha)=\left|H_{\alpha}\right|+(d+1)-\left|P_{\alpha}\right|.
\end{eqnarray}

Note, for example, when $d=1$, we have recovered the classic FRC of an edge $e \in E=C_1$ as
\begin{eqnarray}\label{eq:forman_curvature_1_cells}
    \mathrm{F}(e)=\left|\{t\in C_2,\,\, t > e\}\right| \,+2 \,   -\left|\{e'\in E,\,\,e'//e\}\right|,
\end{eqnarray}
where $t$ denotes a triangle ($2$-face).
The $d$-th FRC of a non-empty $C_d$ complex is given by the average of the Forman-Ricci curvatures by face, \textit{i.e.,}

\begin{eqnarray}
    \mathrm{F}_{d}(C)= \dfrac{1}{|C_d|}\sum_{\alpha\in C_d}   \mathrm{F}_{d}(\alpha).
\end{eqnarray}
The computation of FRC for dimensions $d=1$ and $d=2$ can be clarified in \cref{ex:FRC_calculation}.
\section{Results}\label{sec:results}
Subsequently, we propose our set-theoretic representation theory for the neighbourhood of faces and develop the associated efficient algorithm computing high-order FRC in simplicial complexes.
\subsection{Set-theoretic representation of Forman-Ricci curvature}
The set of neighbours of $\alpha$  can be redefined in terms of the node neighbourhood as follows (see \cref{theo:neighb_charact}):
\begin{eqnarray}\label{eq:N_set2}
   N_\alpha=\bigsqcup_{\gamma\in\partial(\alpha)}\,\bigcup_{\substack{x\in\pi_\gamma\neq \emptyset\\x\notin\alpha}}\{\gamma\cup\{x\}\},
\end{eqnarray}
where the $\bigsqcup$ operation represents the disjoint union of sets.
The \cref{theo:2implies1} also characterizes the neighbours of a $d$-face by finding faces that reach the face neighbouring condition 1., which means that it is sufficient for two $d$-faces to have a common boundary to be neighbours.
The transverse neighbours of $\alpha$ can also be characterized (see \cref{theo:set_of_transverse}), which lead as to the formula
\begin{eqnarray}\label{eq:T_set_theory}
T_\alpha=\bigsqcup_{\gamma\in\partial(\alpha)}\,\bigcup_{\substack{x\in\pi_\gamma\neq \emptyset \\ x\in \pi_{\alpha}\neq \emptyset\\x\notin\alpha}}\{\gamma\cup\{x\}\}.
\end{eqnarray}

From (\ref{eq:T_set_theory}) and (\ref{eq:all_cells}), we derive the set of parallel  faces from a complementary set of transverse cells, 
\begin{equation}\label{eq:P_set_theory}
    P_{\alpha}=\bigsqcup_{\gamma\in\partial(\alpha)}\,\bigcup_{\substack{x\in\pi_\gamma\neq \emptyset \\ x\notin \pi_{\alpha}\neq \emptyset\\x\notin\alpha}}\{\gamma\cup\{x\}\}.
\end{equation}
In practical terms, equations (\ref{eq:T_set_theory}) and (\ref{eq:P_set_theory}) suggest that the transverse and parallel faces can be obtained from (\ref{eq:N_set2}) by deciding if each neighbouring face of $\alpha$ is contained in a $(d+1)$-face or not.
The details are shown in \cref{theo:set_of_parallel}. Also, from \cref{theo:T_cardinality},
\begin{equation}\label{eq:all_neigh_formula}
        \left|T_{\alpha}\right| = (d+1) \left|H_{\alpha}\right|,
    \end{equation}
which together with \eqref{eq:forman_ricci_set_theory} provides
\begin{equation}\label{eq:FRC_algorithm_v1}
    \mathrm{F}(\alpha)=\frac{1}{(d+1)}|T_{\alpha}|+(d+1)-|P_{\alpha}|,
\end{equation}
which after expansion provides
\begin{equation}\label{eq:FRC_algorithm_v1_symbolic}
    \mathrm{F}(\alpha)=\frac{1}{(d+1)}\sum_{\gamma \in \partial(\alpha)}|\pi_{\alpha}\cap \pi_{\gamma}-\alpha|+(d+1)-\sum_{\gamma \in \partial(\gamma)}|\pi_{\gamma}-\pi_{\alpha}-\alpha|.
\end{equation}
%where the symbol $"-"$ here is the set subtraction operation.
%
Moreover, we can express the number of faces of dimension $d+1$ containing $\alpha$ (see proof in \cref{theo:set_of_highn}) as
\begin{eqnarray}
   \label{eq:cells_containing_alpha}
   H_{\alpha}=\bigcup_{x \in \pi_\alpha \neq \emptyset } \{\alpha\cup\{x\} \}.
\end{eqnarray}
It follows from equations \eqref{eq:all_cells}, \eqref{eq:N_set2} and \eqref{eq:T_set_theory} that
%
%
%
%Combining equations \eqref{eq:forman_ricci_set_theory}, \eqref{eq:cells_containing_alpha} and \eqref{eq:P_set_theory}, and considering $\alpha=\{x_0,\hdots,x_d\}$ we finally have that

\begin{equation}\label{eq:forman_d_cells_alternative}
    \mathrm{F}_{d}(\alpha)=|\pi_{\alpha}|+(d+1)-\sum_{\gamma \in \partial(\gamma)}|\pi_{\gamma}-\pi_{\alpha}-\alpha|.
\end{equation}
%
%\begin{eqnarray}\label{eq:forman_d_cells_alternative_node_neigh}
%    \mathrm{F}_{d}(\alpha) = 
%    \left|  \bigcap\limits_{i=0}\pi_{x_i}  \right|\,+(d+1) \,- \left| \bigsqcup_{i=0} \bigcap\limits_{\substack{j=0 \\ j\neq i}}\pi_{x_j}\setminus{(\pi_{x_i}}\cup\{x_{i}\}) \right|.
%\end{eqnarray}
%
Equation \eqref{eq:forman_d_cells_alternative} provides a direct computation of FRC from the parallel faces. However, an alternative equation can be derived by computing FRC directly from the face neighbourhood. This is obtained by combining \eqref{eq:all_cells}, \eqref{eq:forman_ricci_set_theory} and \cref{eq:all_neigh_formula} we have%the fact that (\cref{theo:T_cardinality})
%\begin{equation}
%        \left|T_{\alpha}\right| = (d+1) \left|H_{\alpha}\right|,
%    \end{equation}
%
%which results explicitly in
\begin{eqnarray}\label{eq:forman_ricci_alternative_version}
   \mathrm{F}_{d}(\alpha)=(d+2)\left|\pi_\alpha\right|+(d+1)-\left|N_{\alpha}\right|.
\end{eqnarray}
The expansion of the term $\left|N_{\alpha}\right|$ leads to
%\begin{equation}
%\left|N_{\alpha}\right|=\sum_{i=0}\left| \bigcap_{\substack{j=0 \\ j\neq i}}\pi_{x_j}-\{x_i\} \right|=\sum_{i=0}\left| \bigcap_{\substack{j=0 \\ j\neq i}}\pi_{x_j} \right|-(d+1),
%\end{equation}
\begin{equation}
\left|N_{\alpha}\right|=\sum_{\gamma \in \partial(\alpha)}|\pi_{\gamma}-\alpha|=\sum_{\gamma \in \partial(\alpha)}|\pi_\gamma|-(d+1),
\end{equation}
which when applied to \eqref{eq:forman_ricci_alternative_version}  leads us to
\begin{eqnarray}\label{eq:forman_ricci_algorithm_version}
    \mathrm{F}_{d}(\alpha)=(d+2)\left|\pi_{\alpha}\right|+2\cdot(d+1)-\sum_{\gamma \in \partial(\alpha)}\left| \pi_{\gamma} \right|,
\end{eqnarray}
%which after further simplification, will provide an explicit formula in terms of node neighbourhood:
%\begin{eqnarray}\label{eq:forman_ricci_node_neigh_version}
%    \mathrm{F}_{d}(\alpha)=(d+2)\left|\bigcap_{i=0}{}\pi_{x_i}\right|+2\cdot(d+1)-\sum_{i=0}\left| \bigcap_{\substack{j=0 \\ j\neq i}}\pi_{x_j} \right|.
%\end{eqnarray}
%The details for reaching the formula in \eqref{eq:forman_ricci_algorithm_version} are shown in \cref{theo:FRC_set_formula}.
%
%
Equations \eqref{eq:FRC_algorithm_v1_symbolic},\eqref{eq:forman_d_cells_alternative} and \eqref{eq:forman_ricci_algorithm_version} not only establish new formulations for FRC but also crucially enable us to propose novel algorithms to efficiently compute FRC in terms of the local node neighbourhoods of the faces into consideration. We propose \cref{alg:frc,alg:frc1,alg:frc2} that should be parameterized by the graph $G=(V,E),$ the list of faces ($C$), and the maximum face dimension ($d_{\max}$), and it computes FRC up to dimension $d\leq d_{\max}$.
\begin{algorithm}
\caption{Compute Forman-Ricci Curvature: Method A}
\label{alg:frc1}
\begin{algorithmic}
\STATE{Input: $C,d_{\max}\geq 1$}
\STATE{$\mathcal{F}_d=\emptyset,\forall d\leq d_{\max}$}
\STATE{Compute \text{Neigh(G)} from \cref{alg:Neig};}
%\FOR{$d\leq d_{\max}$}
\FOR{$\alpha\in C$}
\STATE{$d:=|\alpha|-1$}
\STATE{Compute $\mathrm{F}_{d}(c)$ according to \eqref{eq:FRC_algorithm_v1_symbolic}:

    %\item set $p:=0,\,t:=0;$
    %\item compute $\partial(\alpha)$ and $\pi_\alpha$;
    \STATE{Compute $\pi_{\alpha};$}
    \FOR{ $\gamma \in \partial(\alpha)$}
    \STATE{set $M:=\{v\in V\,|\,v\in \pi_{\gamma}, \, v \notin \alpha\}$, $t:=0$, $p:=0;$}
    \FOR{ $v \in M$}
    \IF{ $v \in \pi_\alpha$}
    \STATE{$t:=t+1$}
    \ELSE
    \STATE{$p:=p+1$}
   % \item for each $\gamma \in \partial(\alpha)$, set $N:=\{v\in V\,|\,v\in \pi_{\gamma}, \, v \notin \alpha\}$;
   % \item for each $v \in N,$ if $v \in \pi_\alpha$, then $t:=t+1$, else $p:=p+1$;
    %\item set $\mathrm{F}_{d}(\alpha)=\frac{t}{(d+1)}+(d+1)-p.$
    \ENDIF
    \ENDFOR
    \ENDFOR
    \STATE{set $\mathrm{F}_{d}(\alpha)=\frac{1}{(d+1)}\cdot t +(d+1)-p$}

}
\STATE{Update $\mathcal{F}_d := \mathcal{F}_d\cup \{ \mathrm{F}_{d}(\alpha)\}$}
%\ENDFOR
\ENDFOR
\RETURN $\mathcal{F}=\bigcup\limits_{d}\{\mathcal{F}_d\}$
\end{algorithmic}
\end{algorithm}
\begin{algorithm}
\caption{Compute Forman-Ricci Curvature: Method B}
\label{alg:frc2}
\begin{algorithmic}
\STATE{Input: $C,d_{\max}\geq 1$}
\STATE{$\mathcal{F}_d=\emptyset,\forall d\leq d_{\max}$}
\STATE{Compute \text{Neigh(G)} from \cref{alg:Neig};}
%\FOR{$d\leq d_{\max}$}
\FOR{$\alpha\in C$}
\STATE{$d:=|\alpha|-1$}
\STATE{Compute $\mathrm{F}_{d}(c)$ according to \eqref{eq:forman_d_cells_alternative}:

    %\item set $p:=0,\,t:=0;$
    %\item compute $\partial(\alpha)$ and $\pi_\alpha$;
    \STATE{Compute $\pi_{\alpha};$}
    \STATE{Sset $p:=0;$}
    \FOR{ $\gamma \in \partial(\alpha)$}
    \STATE{set $M:=\{v\in V\,|\,v\in \pi_{\gamma}, \, v \notin \alpha\}$}
    \FOR{ $v \in M$}
    \IF{ $v \notin \pi_\alpha$}
    \STATE{$p:=p+1$}
   % \ELSE
    %\STATE{$p:=p+1$}
   % \item for each $\gamma \in \partial(\alpha)$, set $N:=\{v\in V\,|\,v\in \pi_{\gamma}, \, v \notin \alpha\}$;
   % \item for each $v \in N,$ if $v \in \pi_\alpha$, then $t:=t+1$, else $p:=p+1$;
    %\item set $\mathrm{F}_{d}(\alpha)=\frac{t}{(d+1)}+(d+1)-p.$
    \ENDIF
    \ENDFOR
    \ENDFOR
    \STATE{set $\mathrm{F}_{d}(\alpha)=|\pi_{\alpha}| +(d+1)-p$}

}
\STATE{Update $\mathcal{F}_d := \mathcal{F}_d\cup \{ \mathrm{F}_{d}(\alpha)\}$}
%\ENDFOR
\ENDFOR
\RETURN $\mathcal{F}=\bigcup\limits_{d}\{\mathcal{F}_d\}$
\end{algorithmic}
\end{algorithm}
\begin{algorithm}
\caption{Compute Forman-Ricci Curvature: Method C}
\label{alg:frc}
\begin{algorithmic}
\STATE{Input: $C,d_{\max}\geq 1$}
\STATE{$\mathcal{F}_d=\emptyset,\forall d\leq d_{\max}$}
\STATE{Compute \text{Neigh(G)} from \cref{alg:Neig};}
%\FOR{$d\leq d_{\max}$}
\FOR{$\alpha\in C$}
\STATE{$d:=|\alpha|-1$}
\STATE{Compute $\mathrm{F}_{d}(\alpha)$ according to \eqref{eq:forman_ricci_algorithm_version}:}
\STATE{Set $h:=|\pi_\alpha|$ and $n:=0;$}
%\STATE{Compute $\partial(\alpha);$}
\FOR{$\gamma \in \partial(\alpha)$}
%\STATE{Compute $\pi_{\gamma}$}
\STATE{Update $n:=n+|\pi_{\gamma}|;$ }
\STATE{Set $\mathrm{F}_{d}(\alpha)=(d+2)h+2(d+1)-n;$}
\ENDFOR
\STATE{Update $\mathcal{F}_d := \mathcal{F}_d\cup \{ \mathrm{F}_{d}(\alpha)\}$}
%\ENDFOR
\ENDFOR
\RETURN $\mathcal{F}=\bigcup\limits_{d}\{\mathcal{F}_d\}$
\end{algorithmic}
\end{algorithm}
\begin{figure}[!htb]
    \centering
    \includegraphics[width=\linewidth]{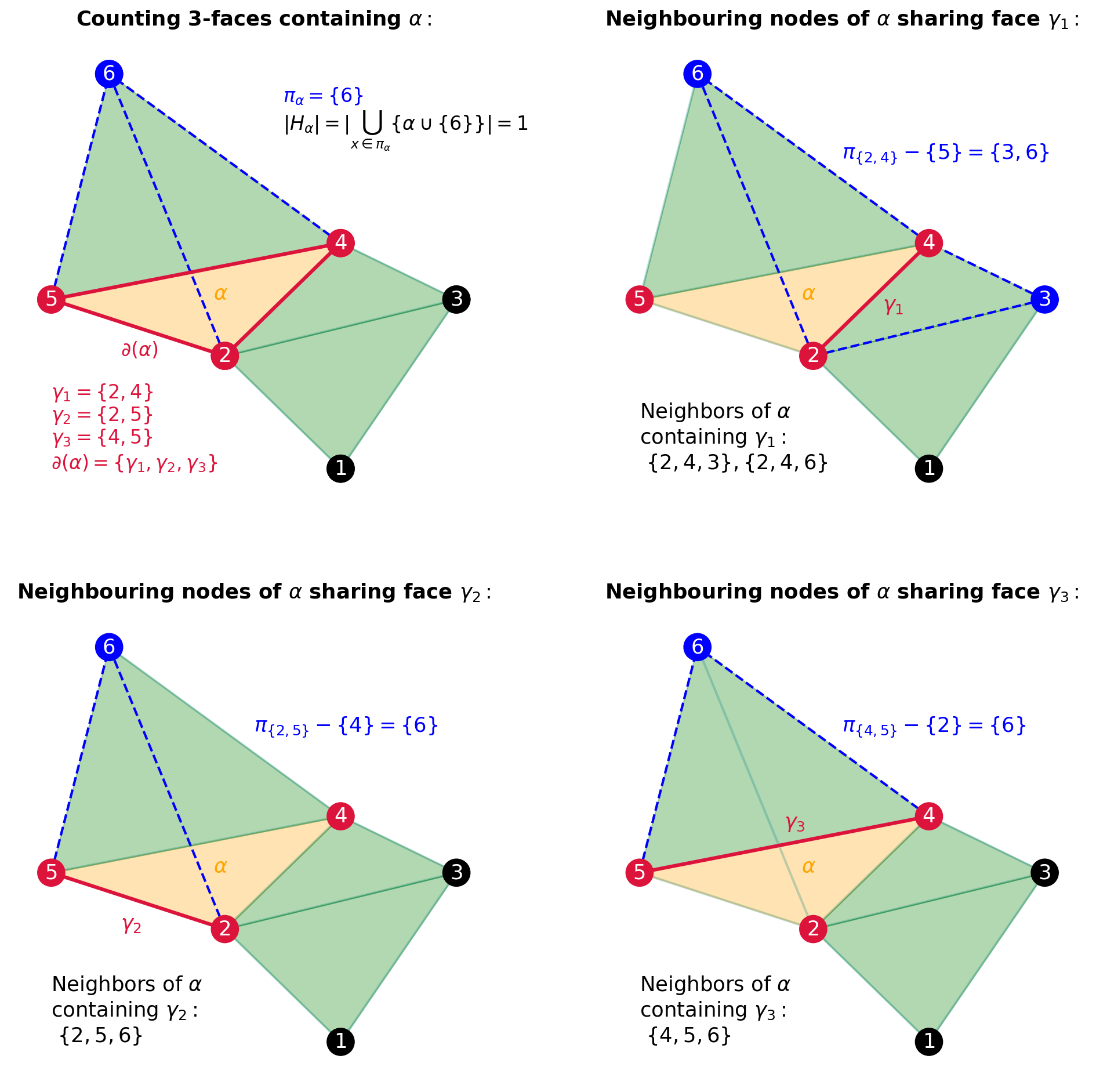}
    \caption{Example of how the algorithm detects face neighbours from node neighbourhood for $2$-faces (triangles). The $2$-face $\alpha$ is enhanced in orange, whilst its boundary is drawn in red. According to the algorithm, the number of $3$-faces containing $\alpha$ in its boundary ($|H_{\alpha}|$) coincides with the number of common neighbours of the nodes in $\alpha$ ($|\pi_\alpha|$). Also, the number of neighbours per boundary is counted by the size of intersections of node neighbourhoods in the boundary. In this example, $\mathrm{F}_2(\alpha)=4\cdot|\{6\}|+3-(|\{3,6\}|+|\{6\}|+|\{6\}|)=3.$ The same output is reached by using \eqref{eq:forman_d_cells}: $\mathrm{F}_2(\alpha)=1+3-1=3.$}  
    \label{fig:algorithm_example}
\end{figure}
\subsection{Computation and Benchmark of Novel Algorithms}
To validate our set-theoretic representation of FRC, we developed a Python language version \cite{python} of our pseudo-codes, which we named \texttt{FastForman} - Method A, B and C, which uses the formulation \eqref{eq:FRC_algorithm_v1_symbolic},  \eqref{eq:forman_d_cells_alternative} and  \eqref{eq:forman_ricci_algorithm_version}, respectively. Furthermore, we run benchmark tests and compared \texttt{FastForman}'s execution time and memory usage with the leading software in the literature, namely Python packages \texttt{HodgeLaplacians} \cite{hodgelaplacians} and \texttt{GeneralisedFormanRicci} \cite{generalisedformanricci}. The benchmark tests were performed on a Dell laptop model XPS 15-7590, Intel Core i7-9750H CPU 2.60GHz, 32Gb RAM, running Linux Ubuntu operating system version 20.04.4 LTS (Focal Fossa).
To carry out the tests, with the help of \texttt{NetworkX} package \cite{hagberg2020networkx}, we generated $20$ copies of $3$-dimensional point cloud data from random geometric networks \cite{penrose2003random} with the parameters $n=50$ nodes and radius $r\in \{0.1,0.2,\hdots,1.0\}$ (to access this data we refer the reader to the link \cite{kaggle_FRC_report}). To undertake an unbiased benchmark test, it is first worth noting that the Python package, \texttt{GeneralisedFormanRicci}, is limited to computing the FRC for edges and triangles. Therefore, we divided the benchmark tests into two separate groups. The first group aimed at comparing \texttt{GeneralisedFormanRicci}, \texttt{HodgeLaplacians} and \texttt{FastForman} (A, B and C) for $d_{max}=2$. The second group focused on comparing \texttt{HodgeLaplacians} versus \texttt{FastForman} (A, B and C) for $d_{max}=5$. To compare the time complexity between all involved software packages, we considered the total execution time from the generation of faces to the end computation of FRC. For a fair comparison, we employed the same software package, \texttt{Gudhi} \cite{maria2014gudhi}, to list the faces of the aforementioned generated networks.  Due to the separate benchmark groups, we first depict in \cref{fig:avg_num_cliques} the average number of faces for $d_{\max}\in\{2,5\}$ contained within the aforementioned generated networks. In \cref{fig:All_time_processing_new} we benchmark the average of the total processing time and average memory peak usage for computing the FRC over the generated random networks. We also compare the results with the time and memory consumption of listing the set of faces via the \texttt{Gudhi} package (light blue curves in the figure). We subsequently discuss the merits and limitations of each software package for computing FRC. 

\texttt{GeneralisedFormanRicci}: Due to computational limitations with memory management, we could not compute FRC for radius values higher than $r=0.7$ via GeneralisedFormanRicci. % It is explained by the excessive use of memory in the routine of the algorithm. 
Despite GeneralisedFormanRicci being limited to lower dimensional computations, the time consumption was the highest, exceeding the \texttt{Hodgelaplacians} and \texttt{FastForman} implementations. Moreover, it displays a high variance of time and memory consumption particularly for dense networks, which might be associated with its limitations in computing FRC up to 2-dimensional faces (compare  \cref{fig:avg_num_cliques,fig:All_time_processing_new}). The limitation to point cloud data sets is also a drawback of the algorithm since it restricts one from selecting a subset of faces over which one may compute FRC.

\texttt{Hodgelaplacians:} The time processing is comparable with the performance results of \texttt{FastForman} for low-dimensional faces and low connectivity values. However, the computational efficiency decreases for high-order faces and larger radii and thus its performance falls behind in comparison to \texttt{FastForman}. Furthermore, the time and memory processing variance increase significantly with respect to the network complexity (\textit{ie.}, size, number of higher-order faces and density). This fact might be explained by the computation of FRC via Bochner's discrete formula, which uses Hodge Laplacian and Bochner Laplacian matrix operators \cite{borner2007network,weber2017characterizing,dakurah2022modelling}. This demands memory for storing the matrices and time to both generate the matrices and compute FRC. It is worth emphasising that the Hodge Laplacian matrix for the highest face dimension is simply the $d_{\max}$-th boundary operator, which in computational terms, induces a miscalculation on FRC for this dimension. Thus, for the correct computation for all faces, the entire simplicial complex should be provided, which impacts directly on the computational complexity. One way to improve efficiency and avoid miscalculation would be to include information of $(d_{\max}+1)$-faces so the correct computation up to $d_{\max}$ can be computed correctly. A detailed report of this special case is included in the link \cite{kaggle_FRC_report}, where we provide an example of the miscalculation.

\texttt{FastForman:} Our implementations had equivalent performance and superseded existing FRC software packages benchmarked in our tests. We provide our  Python implementation in \cite{Barros_de_Souza_FastForman_-_An_2024}
The key to this performance is that our set-theoretic formulation leads us to algorithms that only require the use of local node neighbourhood information for computing the FRC by face, which, as a consequence, reduces the complexity of computing FRC. Noteworthy, there is no requirement for specifically computing and storing neighbouring faces, hence this explains its low memory usage and fast computation. Moreover, \texttt{FastForman} implementations are versatile in the sense that they do not make mandatory the inclusion of the global information of an entire simplicial complex. In other words, the FRC computation is flexible, in a way that it can be computed for subsets of faces within a given simplicial complex. The limiting factors are twofold: Firstly, the complexity of the algorithms is dependent upon how faces 
are computed, which is generically an NP-problem and herein we employed the \texttt{Gudhi} algorithm. Secondly, the algorithm's complexity is associated with node neighbourhood computation, as given by \cref{alg:Neig}, which has a time complexity of $\mathcal{O}(|E|)$ since it iterates over edges. Indeed, \texttt{FastForman}'s implementation processing time increases in accordance with the time required for finding faces; see \cref{fig:All_time_processing_new,fig:avg_num_cliques}. \textcolor{black}{To be more precise, the space complexity is, in the worse case (when the graph is complete), $\mathcal{O}(\mathrm{C}(d_{\max})+|V|^2)$, where $\mathrm{C}(d_{\max})$ is the space complexity for the finding cliques algorithm (up to dimension $d_{\max}$) and $|V|^2$ is the space complexity for storing the neighbourhood of nodes, in the worse scenario. The time complexity is $\mathcal{O}(\sum_{d=1}^{d_{\max}}|\mathrm{C}_{d}|\cdot(d+1)\cdot |V|^{(d+1))}$, where $(d+1)$ is the time complexity for iterating the boundary of a $d$-face and $|V|^{(d+1)}$ is the order complexity for computing set intersection of $(d+1)$ node neighbourhoods (taking into account that we can compute set intersection in $\mathcal{O}(\sum_{x \in \alpha} |\pi_x|$) time.}
\textcolor{black}{As an application, we showcase our optimal performance in random graphs, aiming to extend the results on geometry detection from the work in \cite{weber2017characterizing}. From \cref{theo:FRC_range}, we are able to compare the exact range of values reached by FRC as a function of the number of nodes and the maximum face dimensions. Moreover, we show we can easily compute the frequency (or distribution) of FRC values in function of the dimension of the faces and compare them across different geometric graphs. To this end, we generated the expected frequency of FRC values for $1000$ random graphs in each of the following cases: Erd\H{o}s-Renyi, 2D Random Geometric, Barab{\'a}si-Albert and Watts-Strogatz model (with edge randomization parameter $p=0.25$). All graphs were set up with $n=50$ nodes and an approximate density of $0.5$. We computed the FRC distribuitions for $d\in \{1,2,3,4\}$, which can be seen in \cref{fig:FRC_frequency}. Observe that the standard deviation of the FRC values decreases as the FRC dimension increases. Also, the Watts Strogatz curve becomes less distinguishable from Erd\H{o}s-Renyi, which may indicate that the random features of Watts Strogatz can be detected only via higher-order analysis. In general, all FRC could be used for geometry detection.}

%
%
% Once FastCliques avoids storing cliques, the memory efficiency is higher than using both HodgeLaplacians and GeneralisedFormanRicci.
\begin{figure}[H]

    \centering
    \includegraphics[width=\linewidth]{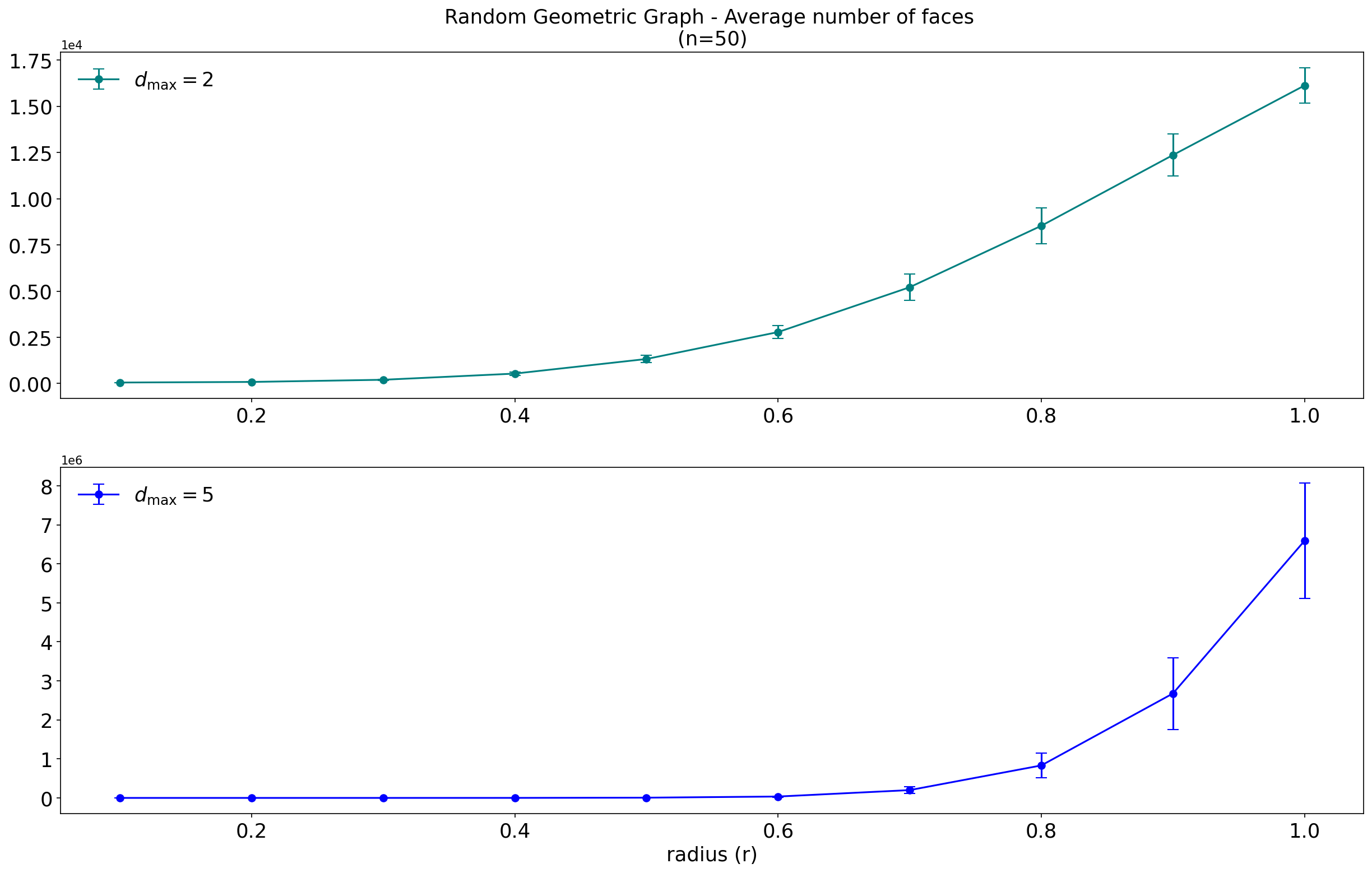}
    \caption{Average number of $d$-faces for the set of $50$ point cloud data, and $d_{\max}\in\{2,5\}.$}
    \label{fig:avg_num_cliques}
\end{figure}
\begin{figure}[H]
    \centering
    \includegraphics[width=\linewidth]{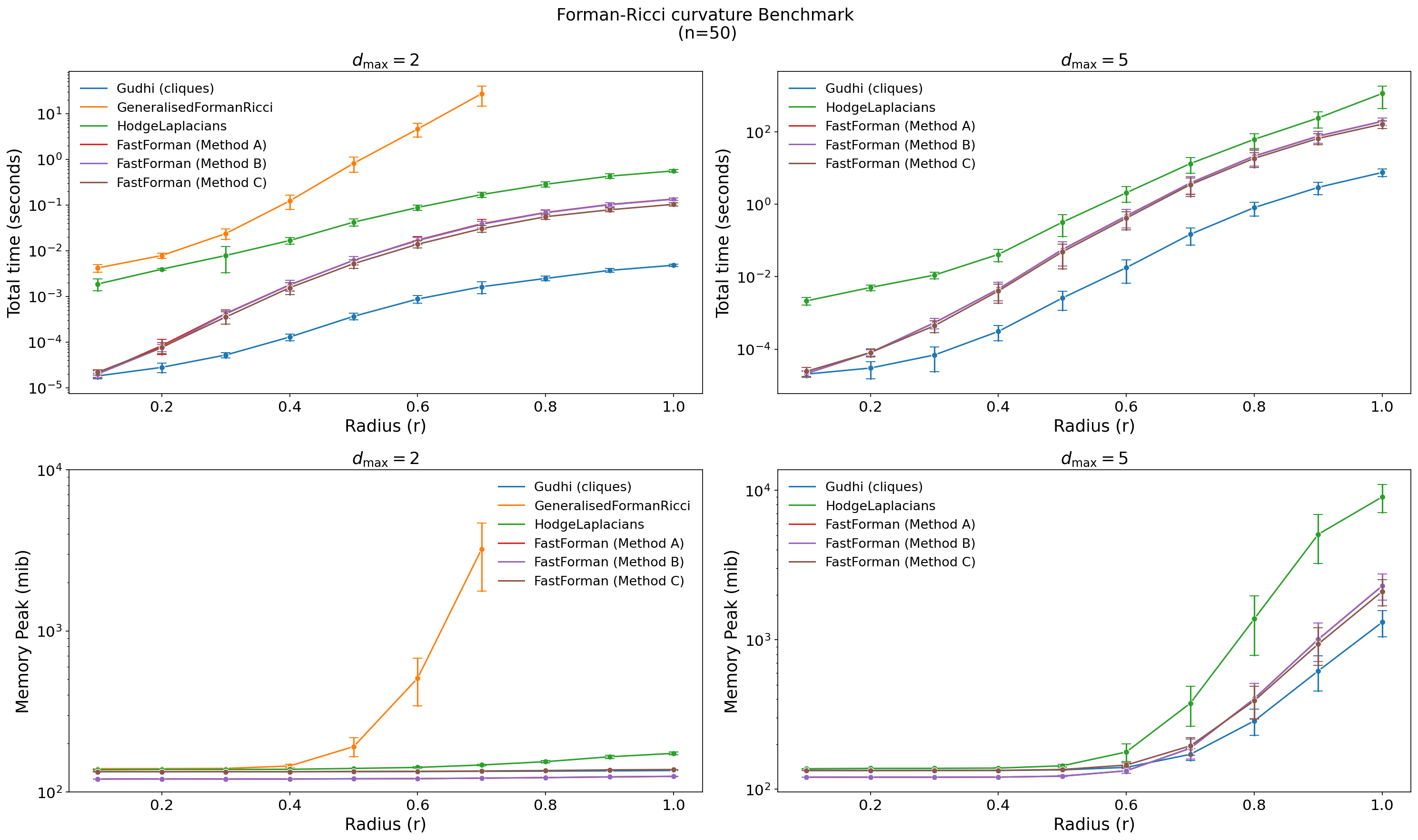}
    \caption{Benchmark for computing Forman-Ricci curvature by using HodgeLaplacians, GeneralisedFormanRicci and FastForman With methods A, B and C, in comparison with the faces computation by using the Gudhi algorithm, for $n=50$ and $d_{\max}\in\{2,5\}$. We display the result in a log scale to clarify the analysis.}
    \label{fig:All_time_processing_new}
\end{figure}

\begin{figure}[!htb]
    \centering
    \includegraphics[width=\linewidth]{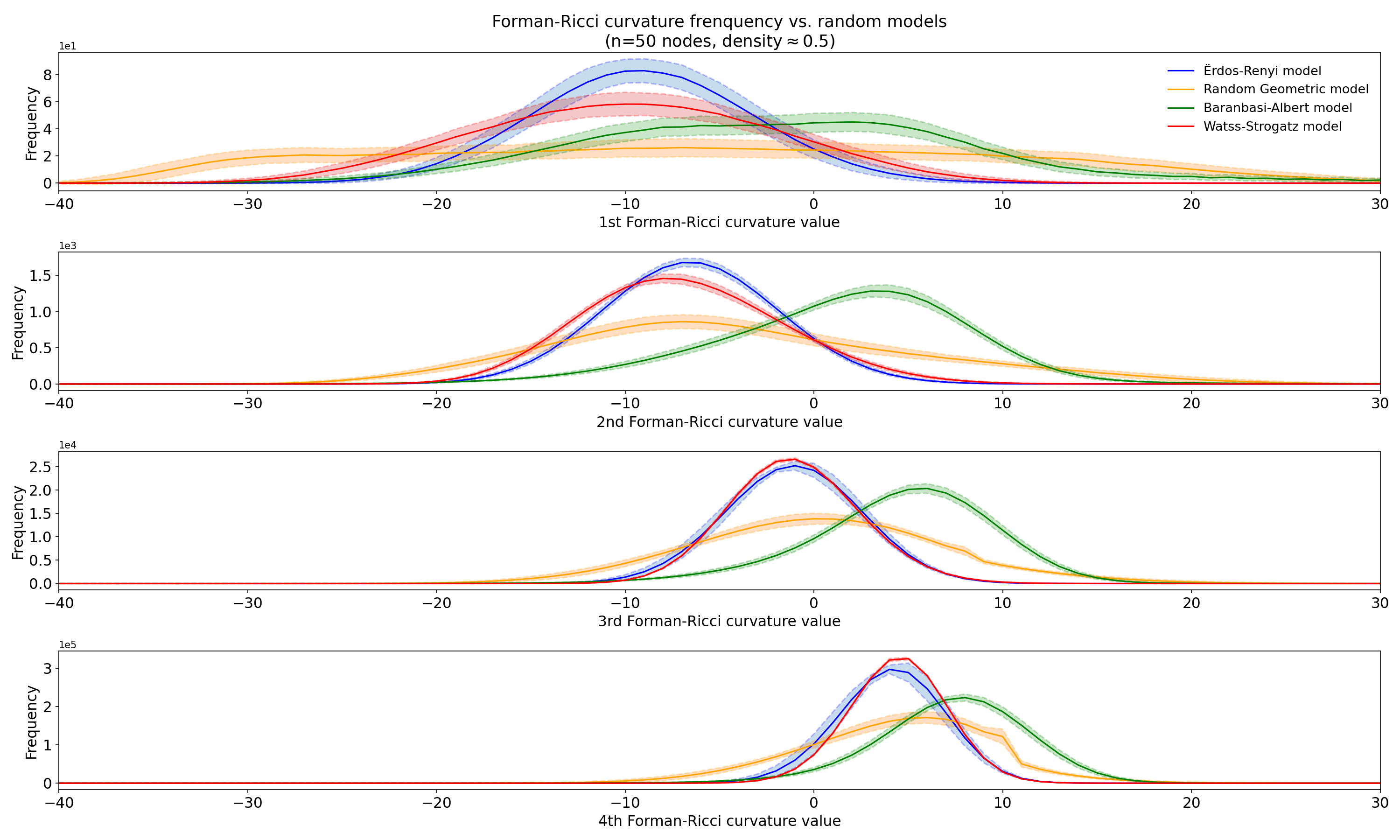}
    \caption{Comparison of FRC for higher order dimensions between different random graph models.}  
    \label{fig:FRC_frequency}
\end{figure}
\section{Conclusion}
\label{sec:conclusion}
We proposed novel set theoretical formulations for high-order Forman-Ricci curvature (FRC) {restricted to simplicial complexes,} and this led us to the implementations of new algorithms, which we coined \texttt{FastForman} (Method A, B and C). Central to our set-theoretic representation theory for high-order network cells and FRC computation is the fact we managed to redefine parallelism and transversality of neighbouring faces. This enabled us to formulate FRC computation in terms of the neighbours of nodes in a network. As a consequence, we managed to computationally tame some of the combinatorial complexities intrinsic to high-order networks. We validated \texttt{FastForman} algorithms via benchmark tests on random higher-order geometric graphs and compared them with leading softwares, namely, \texttt{Hodgelaplacians} and \texttt{GeneralisedFormanRicci}. To ensure a fair comparison, we implemented \texttt{FastForman} with Python language since \texttt{Hodgelaplacians} and \texttt{GeneralisedFormanRicci} are also implemented in Phyton. We also took into consideration the intrinsic limitations of \texttt{Hodgelaplacians} and \texttt{GeneralisedFormanRicci}, such as the maximum number of face dimensions tolerated by these algorithms. Our benchmark tests established that \texttt{FastForman} implementations are more efficient in time and memory complexity. In conclusion, despite the constraints of face search NP algorithms and the combinatorial explosion of high-order network structures, our findings facilitate the computation of higher-order geometric invariants in simplicial complexes. We envisage that \texttt{FastForman} will open novel research avenues in big data science, particularly through a geometrical point of view. Interestingly, our findings based on simplicial complexes motivate the development of novel efficient algorithms and constructive proofs beyond simplicial complexes and in particular for the geometric computations based on cell complexes. Such developments would constitute the starting point towards exploring real data through the lens of cell complexes, which is hardly explored in classical simplicial complexes approaches. We envisage that this progression will have major impact in big data science and complex systems.
\pagebreak
\section{Appendix}
In this section, we provide examples, as well as the demonstrations of the theory developed in our work.
\subsection{Examples}\label{sec:examples}
In this section, we provide examples of our work.
\begin{exmp}[Undirected Simple Graph]\label{ex:simple_graph}
    In \cref{fig:CW_example}, we have a undirected simple graph, $G=(V,E)$, such that $V=\{1,\hdots ,12\}$ and 
    \begin{eqnarray}
    \nonumber
        E=&\\  \{(1,3),(2,3),(2,4),(3,4),(3,5),(3,6),(3,7),(5,6),(5,7),(6,7),(7,9),(8,9),(10,11)\} \nonumber
    \end{eqnarray}
\end{exmp}
\begin{exmp}[Neighborhood of a node]\label{ex:node_neigh} In \cref{fig:CW_example}, we have the neighbours of each node in $V$ as described below:
    \begin{eqnarray}
        \pi_1=\{3\}, \nonumber\\ \nonumber
        \pi_2=\{3,4\}, \\ \nonumber
        \pi_3=\{1,2,4,5,6,7\}, \\ \nonumber
        \pi_4=\{2,3\}, \\ \nonumber
        \pi_5=\{3,6,7\}, \\ \nonumber
        \pi_6=\{3,5,7\}, \\ \nonumber
        \pi_7=\{3,5,6,9\}, \\ \nonumber
        \pi_8=\{9\}, \\ \nonumber
        \pi_9=\{7,8\}, \\ \nonumber
        \pi_{10}=\{11\}, \\ \nonumber
        \pi_{11}=\{10\}, \\ \nonumber
        \pi_{12}=\emptyset.
    \end{eqnarray}
\end{exmp}
\begin{exmp}[$d$-faces and Simplicial Complex]
\label{ex:simplicial_complex}
In \cref{fig:CW_example}, the $d$-faces (or the complete subgraphs with $d+1$ nodes), for $d\in \{0,1,2,3\}$ are
\begin{eqnarray}
\nonumber
 C_0=V,\\ \nonumber
 C1=E,\\ \nonumber
 C_2=\{\{2,3,4\},\{3,5,6\},\{3,6,7\},\{3,5,7\},\{5,6,7\},\\ \nonumber
 C_3=\{\{3,5,6,7\}\}. \nonumber
\end{eqnarray}
The simplicial complex is $C=\bigcup_{i=0}^{3}C_i.$
\end{exmp}
\begin{exmp}[Node neighbourhood of a face]\label{ex:node_neigh_of_cell}
In \cref{fig:CW_example}, $\pi_{\{12\}}=\emptyset=\pi_{\{2,3,4\}}$ and $\pi_{3}=\{1,2,4,5,6\}.$ Also, $\pi_{\{3,5,6\}}=\{7\}.$ 
\end{exmp}

\begin{exmp}[Boundary of a cell]\label{ex:boundary_of_cell}
    The boundary of $0$-faces is always an empty set. In \cref{fig:CW_example}, the boundary of the $1$-face $\{7,9\}$ is $$\partial(\{7,9\})=\{\{7\},\{9\}\}.$$ The boundary of the $2$-face $\{2,3,4\}$ is $$\partial(\{2,3,4\})=\{\{2,3\},\{2,4\},\{3,4\}\},$$ and the boundary of the $3$-face $\{3,5,6,7\}$ is $$\partial(\{3,5,6,7\})=\{\{3,5,6\},\{3,5,7\},\{3,6,7\},\{5,6,7\}\}.$$
\end{exmp}
\begin{exmp}[face neighborhood]\label{ex:neigh_exemple}
In \cref{fig:CW_example}, the $0$-face $\{12\}$ has no neighbours, as well as the $1$-face $\{10,11\}$, the $2$-face $\{2,3,4\}$ and the $3$-face $\{3,5,6,7\}.$
The $2$-face $\{3,4\}$ has $6$ neighbours, in which $2$ are transverse ($\{2,3\},\{2,4\}$) and $4$ are parallel ($\{1,3\},\{3,5\},\{3,6\}$ and \{3,7\}). The $2$-face $\{3,5,7\}$ has $3$ neighbours, $\{3,6,7\},\{3,5,6\}$ and $\{5,6,7\}$, in which all of them are transverse.
\end{exmp}
\begin{figure}\label{fig:neigh_condition}
    \centering
    \includegraphics[width=\linewidth]{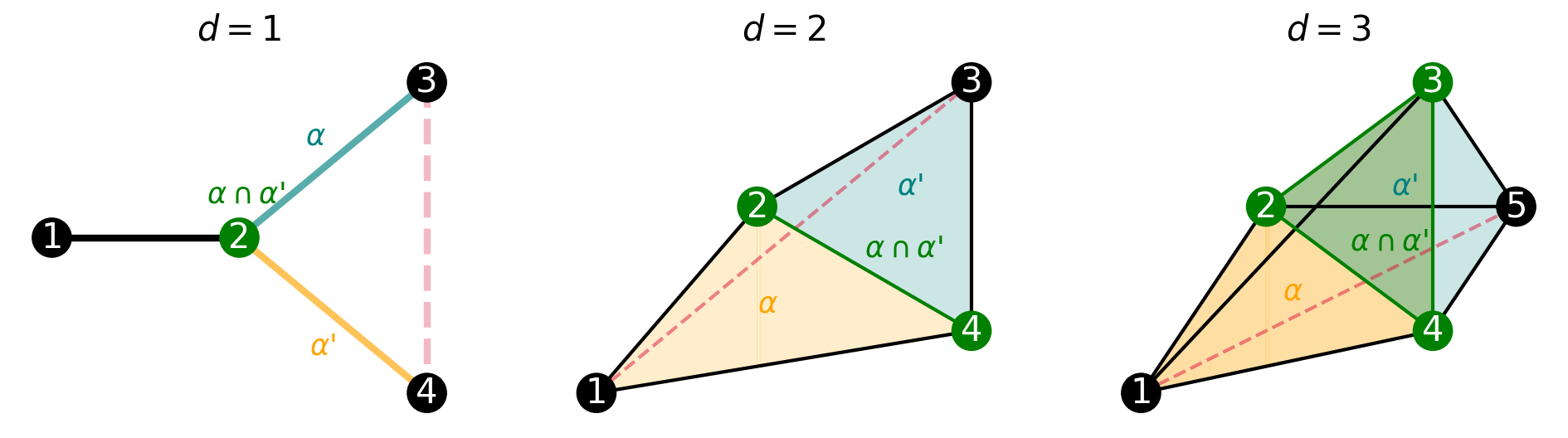}
    \caption{Neighborhood condition depicted, for $d\in\{1,2,3\}$. For all dimensions, the neighbouring faces $\alpha$ and $\alpha'$ are drawn in blue and yellow, respectively, whilst the common boundary ($\alpha\cap\alpha'$) is represented in green. The presence (or absence) of the red dashed edge in the simplicial complex defines whether $\alpha$ and $\alpha'$ are transverse or parallel. In all cases, the intersection $\gamma:=\alpha \cap \alpha '$ is a $(d-1)$-face such that $\gamma<\alpha, \alpha'$, which guarantees that for $\alpha$ and $\alpha'$ to be neighbours, it is sufficient to reach the { face}neighbourhood condition $1.$ is reached. This result is a consequence of \cref{theo:2implies1}.}
\end{figure}
\begin{exmp}[Face neighborhood condition]\label{ex:Neigh_condition}
In \cref{fig:neigh_condition}, we have the characterization of neighborhood of $d$-faces, $\alpha$ (blue) and $\alpha'$ (yellow), for $d\in\{1,2,3\}.$
 Suppose that the red-dashed edge belongs to the graph. For $d=1$, we have that $\alpha=\{2,3\}$ and $\alpha'=\{2,4\}$ are transverse neighbours, once that the $2$-face $\beta=\{2,3,4\}$ is such that $\beta>\alpha,\alpha'$. Analogously, for $d=2$, the $2$-faces $\alpha=\{ 1,2,4\}$ and $\alpha'=\{2,3,4\}$ are transverse once that $\beta=\{1,2,3,4\}$ is a $3$-face such that   $\beta>\alpha,\alpha'$. Also, for $d=3$, $\alpha=\{1,2,3,4\}$, $\alpha'=\{2,3,4,5\}$ are transverse once that $\beta=\{1,2,3,4,5\}$ is a $4$-face such that $\beta>\alpha,\alpha'$. Regardless of the existence of the red-dashed edge in the graph, we have that, for $d\in\{1,2,3\}$, there is a $(d-1)$-cell, $\gamma:=\alpha \cap \alpha'$ such that $\gamma<\alpha,\alpha'$. This fact can be justified by the result of \cref{theo:boundary_charact}, which guarantees that is sufficient to have a common boundary between two $d$-faces so they can be neighbours.
 The dashed-red edge's absence (or presence) guarantees they are parallel (transverse) neighbours.
\end{exmp}
\begin{exmp}[Forman-Ricci curvature]\label{ex:FRC_calculation}
    In \cref{fig:CW_example}, $F_{1}(\{3,4\})=1+2-4=-1,$ $F_{2}(\{3,5,7\})=1+3-0=4.$
\end{exmp}

\subsection{Demonstrations}\label{sec:demonstrations}
Here we are considering $\pi_\alpha$, $\pi_x$ and the concepts of abstract simplicial complexes as defined in \cref{sec:network_background}.
\begin{theorem}\label{theo:C_d}
Let $C_d$ be the set of all $d$-faces of a simplicial complex, as defined in \cref{sec:network_background}. Then,
\begin{eqnarray}\label{eq:C_d_proof}
   C_{d}=\Big\{\alpha:=\{x_0,\hdots,x_d\},\,\,x_i\in\pi_{\alpha\setminus\{x_i\}},\,\forall i \Big\}.
\end{eqnarray}
\end{theorem}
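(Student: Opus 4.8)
The plan is to prove the set equality by double inclusion, simply unpacking the definition of a $d$-cell together with the definition of the node neighbourhood of a cell in \eqref{eq:cell_neighborhood}. The one fact to keep in front of us throughout is that for any subset $S\subseteq V$ one has $\pi_S=\bigcap_{x\in S}\pi_x$, so that the membership condition ``$x_i\in\pi_{\alpha\setminus\{x_i\}}$'' is, by definition, equivalent to ``$x_i\in\pi_{x_j}$ for every $j\neq i$'', which by \eqref{eq:node_neighborhood} is in turn equivalent to ``$(x_i,x_j)\in E$ for every $j\neq i$'', i.e.\ $x_i\sim x_j$ for all $j\neq i$.

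For the inclusion $C_d\subseteq\{\,\alpha=\{x_0,\dots,x_d\}:x_i\in\pi_{\alpha\setminus\{x_i\}}\ \forall i\,\}$, I would take an arbitrary $\alpha\in C_d$ and write it as $\alpha=\{x_0,\dots,x_d\}$ with the $x_i$ pairwise distinct. Since $\alpha$ is by definition a complete subgraph of $G$, we have $(x_i,x_j)\in E$ for all $i\neq j$; passing through $x\sim y\iff x\in\pi_y$ gives $x_i\in\pi_{x_j}$ for all $j\neq i$, hence $x_i\in\bigcap_{j\neq i}\pi_{x_j}=\pi_{\alpha\setminus\{x_i\}}$, which is exactly the condition on the right-hand side.

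For the reverse inclusion, I would start from a set $\alpha=\{x_0,\dots,x_d\}$ satisfying $x_i\in\pi_{\alpha\setminus\{x_i\}}$ for all $i$, which as observed above says $(x_i,x_j)\in E$ for every pair $i\neq j$. Two small points then need checking. First, the $d+1$ listed nodes are genuinely distinct, so that $|\alpha|=d+1$: this holds because $G$ is a simple graph, so $x\notin\pi_x$, and therefore $x_i\in\pi_{x_j}$ forces $x_i\neq x_j$. Second, $\alpha$ is a complete subgraph, which is immediate since all $\binom{d+1}{2}$ pairs of its nodes are edges of $G$. Hence $\alpha$ is a complete subgraph on $d+1$ vertices, i.e.\ $\alpha\in C_d$, completing the proof.

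Finally I would comment on the degenerate case $d=0$: there the condition reads $x_0\in\pi_{\emptyset}$, and with the standard convention that the empty intersection equals the ambient vertex set $V$, this is satisfied for every $x_0\in V$, consistently with $C_0=V$ and with $\partial(\alpha)=\emptyset$ for $0$-cells. There is no real obstacle in this argument — it is essentially a dictionary translation between ``pairwise adjacency'' and the language of node neighbourhoods — the only steps deserving a word of care being the distinctness of the nodes (which uses simplicity of the graph) and the empty-intersection convention that makes the $d=0$ case come out right.
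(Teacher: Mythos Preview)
Your proof is correct and follows essentially the same approach as the paper: a double inclusion that unpacks the neighbourhood condition into pairwise adjacency. The only cosmetic differences are that the paper phrases the forward inclusion as a contrapositive while you argue it directly, and that you add explicit remarks on distinctness of the $x_i$ (via simplicity of $G$) and on the $d=0$ empty-intersection convention, points the paper leaves implicit.
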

\begin{proof}
    First inclusion will be proven by contradiction. Let $\alpha\in C_d$. Suppose that $\alpha$ is not contained in the right-hand side of the equation above, \textit{i.e.,} there exists $x\in \alpha$ such that $x\notin \pi_{\alpha\setminus \{x\}}.$ It means that exists $y \in \alpha$ such that $y\neq x$ and $x \notin  \pi_y$, which implies that $\alpha$ does not belong to $C_d$.

    Also, let $\alpha$ as on the right-hand side in \eqref{eq:C_d_proof}, from the definition of the set on the right-hand side, in $G$ every pair of vertices in $\alpha$ is connected. Also $|\alpha|=d+1$. Define $V_\alpha=\alpha$ and  $E_\alpha = \{(x_i,x_j)\hspace{1mm}|\hspace{1mm} x_i,x_j\in V_\alpha, x_i\neq x_j\}$ such that $G_\alpha = (V_\alpha,E_\alpha)$.  It is clear that $G_\alpha$ is a complete subgraph with $d+1$ nodes and we can identify $V_\alpha \sim G_\alpha$. It follows that $\alpha \in C_d$. %
\end{proof}

\begin{theorem}\label{theo:boundary_charact}
Let $\alpha,\alpha'\in C_d$, $\alpha\neq \alpha'$. They are neighbours if and only if $\alpha \cap \alpha'$ is the $(d-1)$-face that is common to the boundaries {of $\alpha$ and $\alpha'$}.
    
\end{theorem}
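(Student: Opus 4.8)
The plan is to prove the two implications separately, reducing everything to elementary cardinality bookkeeping together with the single structural fact that the clique (simplicial) complex is closed under taking subsets: any subset of a complete subgraph is again a complete subgraph.

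\textbf{The ``if'' direction.} Suppose $\gamma:=\alpha\cap\alpha'$ is a $(d-1)$-cell with $\gamma<\alpha$ and $\gamma<\alpha'$. Then cell-neighbourhood condition~$1.$ holds verbatim with this $\gamma$, so by definition $\alpha$ and $\alpha'$ are neighbours. Nothing further is required here.

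\textbf{The ``only if'' direction.} Assume $\alpha,\alpha'\in C_d$ with $\alpha\neq\alpha'$ are neighbours, and first record that, since $|\alpha|=|\alpha'|=d+1$ and $\alpha\neq\alpha'$ (so $\alpha\cap\alpha'$ cannot equal either cell), we have $|\alpha\cap\alpha'|\leq d$. Now split according to which cell-neighbourhood condition is met. If condition~$1.$ holds, there is a $(d-1)$-cell $\gamma$ with $\gamma\subset\alpha$ and $\gamma\subset\alpha'$, hence $\gamma\subseteq\alpha\cap\alpha'$; combined with $|\gamma|=d$ and the bound above, this forces $\gamma=\alpha\cap\alpha'$, so $\alpha\cap\alpha'$ is a $(d-1)$-cell lying in both boundaries. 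If condition~$2.$ holds, there is a $(d+1)$-cell $\beta$ with $\alpha\subset\beta$ and $\alpha'\subset\beta$; since $|\beta|=d+2$, we may write $\alpha=\beta\setminus\{a\}$ and $\alpha'=\beta\setminus\{a'\}$, and $\alpha\neq\alpha'$ gives $a\neq a'$, so $\alpha\cap\alpha'=\beta\setminus\{a,a'\}$ has exactly $d$ elements. Being a subset of the complete subgraph $\beta$, it is itself a complete subgraph on $d$ vertices, i.e.\ a $(d-1)$-cell, and it is clearly contained in both $\alpha$ and $\alpha'$, hence lies in $\partial(\alpha)\cap\partial(\alpha')$. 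In both cases $\alpha\cap\alpha'$ is a $(d-1)$-cell common to the two boundaries, and it is the unique such cell by the cardinality bound. (For $d=0$ the boundary of a cell is empty and the statement is read trivially, two $0$-cells being neighbours exactly when they span an edge; this case plays no role in what follows, where $d\geq 1$.)

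\textbf{Main obstacle.} The only non-formal step is the transverse case of the ``only if'' direction: a priori, sharing a coface $\beta$ says nothing about sharing a face, and one must invoke closure of the clique complex under subsets to produce the $(d-1)$-cell $\alpha\cap\alpha'$. This is precisely the mechanism making condition~$1.$ sufficient for being neighbours, and it is the fact subsequently exploited in \cref{theo:2implies1} and in the set-theoretic expansions \eqref{eq:N_set2}--\eqref{eq:P_set_theory}.
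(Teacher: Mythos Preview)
Your proof is correct and follows essentially the same strategy as the paper's: split the forward implication according to the two neighbourhood conditions, use a cardinality sandwich for condition~$1.$, and for condition~$2.$ compute $\alpha\cap\alpha'$ from the common coface $\beta$ before invoking that subsets of cliques are cliques. Your bookkeeping is a bit more direct (writing $\alpha=\beta\setminus\{a\}$, $\alpha'=\beta\setminus\{a'\}$ and intersecting, where the paper writes $\beta=\alpha\cup\{x\}=\alpha'\cup\{y\}$ and expands $\beta\cap\beta$ via distributivity), but the structure and ideas are the same.
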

\begin{proof}
The forward implication is proven as follows: Suppose condition 1. of the face neighbourhood is satisfied. There exists $\gamma \in C_{d-1}$ such that $\gamma<\alpha,\alpha'$. In particular, $\gamma\subset \alpha,\alpha'$ and $|\gamma|=d$. It is sufficient to prove that $\gamma=\alpha\cap\alpha '$.
The inclusion $\gamma\subseteq \alpha \cap \alpha'$ is obvious. Suppose, by absurdity, that there exists {$x\in \alpha\cap\alpha'$} such that $x\notin \gamma$. Once the first inclusion is valid, it follows that $\gamma\cup \{x\}\subset (\alpha\cap\alpha')\cup \{x\}=\alpha\cap\alpha'$, which lead us to $|\gamma\cup\{x\}|\leq d$. %In particular, $x \notin \alpha$ ($x\notin \alpha'$), which is absurd.

Now, suppose that condition 2. is reached. There exists $\beta \in C_{d+1}$ such that {$\beta > \alpha,\alpha'$. In particular,} $\beta\supset \alpha \cap \alpha':=\gamma.$ First, we need to prove that $|\gamma|$=d. Note that we can write $\beta=\alpha\cup \{x\},$ for some $x \notin \alpha.$ Analogously, we can write $\beta=\alpha '\cup \{y\},$ for some $y\notin \alpha ',$ with $x \neq y.$ By intersecting this equalities we have $\beta=(\alpha\cup\{x\})\cap (\alpha'\cup \{y\})=[\alpha\cap (\alpha'\cup \{y\})]\cup[\{x\}\cap (\alpha'\cup \{y\})]=[\gamma \cup (\alpha\cap\{y\})]\cup [(\{x\}\cap \alpha')\cup (\{x\}\cap\{y\})]=\gamma \cup \{x,y\}.$ Thus, $|\gamma|=|\beta|-2=d.$
Suppose that $\gamma \notin C_{d-1}.$ Then, there exists $y\in \gamma$ such that $y \notin \bigcap_{\substack{x \in \gamma \\ x\neq y}}\pi_x$. However, $\gamma \subset \alpha$, and it would imply that $\alpha\notin C_d.$
The opposite implication of the theorem is obvious.
\end{proof}
\begin{theorem}\label{theo:neighborhood_condition} Let $\alpha, \alpha' \in C_d.$
    Then, $\alpha$ and $\alpha'$ are neighbours if and only if $\left| \alpha \cap \alpha' \right|=d.$
\end{theorem}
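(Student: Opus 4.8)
The plan is to obtain \cref{theo:neighborhood_condition} as an almost immediate corollary of \cref{theo:boundary_charact}, the only extra work being the bookkeeping that relates ``being the common $(d-1)$-cell in the two boundaries'' to the purely numerical condition $|\alpha\cap\alpha'|=d$.

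First I would record the trivial bound: since $\alpha,\alpha'\in C_d$ both have $d+1$ nodes, $|\alpha\cap\alpha'|\le d+1$, with equality precisely when $\alpha=\alpha'$; hence whenever $\alpha\neq\alpha'$ we have $|\alpha\cap\alpha'|\le d$. For the forward implication, assume $\alpha$ and $\alpha'$ are neighbours. Then in particular $\alpha\neq\alpha'$ (a cell is not counted among its own neighbours), and \cref{theo:boundary_charact} tells us that $\alpha\cap\alpha'$ is the $(d-1)$-cell common to $\partial(\alpha)$ and $\partial(\alpha')$; a $(d-1)$-cell has exactly $d$ nodes, so $|\alpha\cap\alpha'|=d$.

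For the converse, suppose $|\alpha\cap\alpha'|=d$. Set $\gamma:=\alpha\cap\alpha'$. Then $\gamma\subset\alpha$ and $\gamma\subset\alpha'$ with $|\gamma|=d$, so $\gamma$ is a $d$-node subset of the vertex set of the complete subgraph $\alpha$; consequently every pair of its nodes is joined by an edge of $G$, i.e. $\gamma$ satisfies the membership condition of \eqref{eq:C_d_set} (equivalently, apply \cref{theo:C_d}), whence $\gamma\in C_{d-1}$. Since $\gamma\in C_{d-1}$ and $\gamma\subset\alpha$, $\gamma\subset\alpha'$, we have $\gamma<\alpha$ and $\gamma<\alpha'$, which is exactly cell neighbourhood condition~1. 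Hence $\alpha$ and $\alpha'$ are neighbours, completing the equivalence. Alternatively, once $\gamma\in C_{d-1}$ has been established one may simply quote the reverse implication of \cref{theo:boundary_charact}.

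I do not anticipate a genuine obstacle here, since the statement is essentially a reformulation of \cref{theo:boundary_charact}. The one point that deserves a line of care is the claim that a $d$-element subset of a $d$-cell is automatically a $(d-1)$-cell — this uses that an induced subgraph of a complete graph is complete, captured formally by \cref{theo:C_d} — together with the harmless degenerate case $\alpha=\alpha'$, which is excluded on both sides of the equivalence.
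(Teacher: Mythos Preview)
Your argument is correct and follows essentially the same route as the paper, which simply records the result as a corollary of \cref{theo:boundary_charact}. The extra bookkeeping you supply for the converse direction (verifying that $\gamma=\alpha\cap\alpha'$ is indeed a $(d-1)$-cell via \cref{theo:C_d}) just makes explicit what the paper leaves implicit.
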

\begin{proof}
It follows as a corollary from \cref{theo:boundary_charact}.
This neighbourhood condition can be better elucidated in \cref{ex:Neigh_condition}.  
\end{proof}
\begin{theorem}\label{theo:2implies1}
Let $\alpha,\alpha'\in C_d.$ Then, the neighborhood condition $2.$ implies the condition $1.$,  \textit{i.e.,} $\alpha$ and $\alpha'$ are neighbours if and only if the condition $1.$ of face neighbourhood is reach.
\end{theorem}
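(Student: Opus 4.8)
The plan is to reduce \cref{theo:2implies1} to the already-established \cref{theo:boundary_charact} (and its corollary \cref{theo:neighborhood_condition}). Recall that condition~$1.$ asserts the existence of a $(d-1)$-cell $\gamma$ with $\gamma < \alpha,\alpha'$, while condition~$2.$ asserts the existence of a $(d+1)$-cell $\beta$ with $\alpha,\alpha' < \beta$. Since by definition $\alpha$ and $\alpha'$ are neighbours as soon as \emph{either} condition holds, the statement ``$\alpha$ and $\alpha'$ are neighbours iff condition~$1.$ holds'' is exactly equivalent to ``condition~$2.$ implies condition~$1.$'': the reverse implication and the ``only if'' direction are immediate from the definition of neighbour, so the only content is showing $2. \Rightarrow 1.$

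First I would assume condition~$2.$: there is $\beta \in C_{d+1}$ with $\alpha, \alpha' < \beta$, i.e. $\alpha,\alpha' \subset \beta$ with $|\alpha| = |\alpha'| = d+1$ and $|\beta| = d+2$. Then I would invoke the computation already carried out inside the proof of \cref{theo:boundary_charact}: writing $\beta = \alpha \cup \{x\}$ for the unique $x \in \beta \setminus \alpha$ and $\beta = \alpha' \cup \{y\}$ for the unique $y \in \beta \setminus \alpha'$ (with $x \neq y$ since $\alpha \neq \alpha'$), intersecting gives $\alpha \cap \alpha' = \beta \setminus \{x,y\}$, hence $|\alpha \cap \alpha'| = (d+2) - 2 = d$. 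So setting $\gamma := \alpha \cap \alpha'$ we have $|\gamma| = d$ and $\gamma \subset \alpha$. It remains to check $\gamma \in C_{d-1}$, i.e. that $\gamma$ is itself a complete subgraph: but $\gamma \subset \alpha$ and $\alpha$ is a clique, so every pair of vertices of $\gamma$ is joined by an edge, giving $\gamma \in C_{d-1}$ by the characterization of $C_{d-1}$ in \cref{theo:C_d}. Therefore $\gamma < \alpha$ and likewise $\gamma < \alpha'$, which is exactly condition~$1.$

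There is essentially no obstacle here — the substantive combinatorial work (the cardinality count $|\gamma| = |\beta| - 2$ and the inheritance of completeness from $\alpha$) was already done in the proof of \cref{theo:boundary_charact}, so this theorem is a bookkeeping corollary. Indeed an even shorter route is available: \cref{theo:boundary_charact} states $\alpha,\alpha'$ are neighbours iff $\alpha \cap \alpha'$ is the common boundary $(d-1)$-cell, which \emph{is} condition~$1.$; hence ``neighbours'' is equivalent to condition~$1.$ outright, and since condition~$2.$ is one of the two ways to be a neighbour, $2. \Rightarrow \text{neighbour} \Rightarrow 1.$ follows instantly. I would present the slightly longer version with the explicit $\beta = \alpha \cup \{x\} = \alpha' \cup \{y\}$ manipulation so the implication is self-contained, and close by noting that the converse $1. \Rightarrow \text{neighbour}$ is trivial, completing the stated equivalence.
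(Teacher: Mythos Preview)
Your proposal is correct and follows the same approach as the paper, which simply states that the result is a corollary of \cref{theo:boundary_charact}. You give more explicit detail (unpacking the $\beta = \alpha \cup \{x\} = \alpha' \cup \{y\}$ computation), but the underlying idea is identical.
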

\begin{proof}
    It follows as a corollary from \cref{theo:boundary_charact}.
\end{proof}
\begin{theorem}\label{theo:neighb_charact}

    Let $\alpha \in C_d$, and $N_{\alpha}$ as defined originally in \cref{sec:Forman_ricci_curvature}. Then,
    \begin{eqnarray}\label{eq:N_proof}
    N_\alpha=\bigsqcup_{\gamma\in\partial(\alpha)}\,\bigcup_{\substack{x\in\pi_\gamma\neq \emptyset\\x\notin\alpha}}\{\gamma\cup\{x\}\}.
\end{eqnarray}
    
\end{theorem}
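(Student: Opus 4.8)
The goal is to show that $N^d_\alpha$, the set of all neighbours of the $d$-cell $\alpha$, equals $\bigsqcup_{\gamma\in\partial(\alpha)}\bigcup_{x\in\pi_\gamma\setminus(\alpha\setminus\gamma)\neq\emptyset}\{\gamma\cup\{x\}\}$. The plan is to use \cref{theo:2implies1} (equivalently \cref{theo:boundary_charact}) to reduce ``$\alpha'$ is a neighbour of $\alpha$'' to ``$\alpha'$ shares a common $(d-1)$-face with $\alpha$'', and then to identify such $\alpha'$ concretely. First I would fix $\gamma\in\partial(\alpha)$ and argue that a $d$-cell $\alpha'\neq\alpha$ with $\gamma<\alpha'$ can be written as $\alpha'=\gamma\cup\{x\}$ for a unique node $x$; by \cref{theo:C_d} applied to $\alpha'$, such a set is a $d$-cell precisely when $x\in\pi_{\gamma}$ (since $x$ must be adjacent to every vertex of $\gamma$, i.e.\ $x\in\bigcap_{y\in\gamma}\pi_y=\pi_\gamma$), and $\alpha'\neq\alpha$ together with $\gamma<\alpha'$ forces $x\notin\alpha\setminus\gamma$ (otherwise $\alpha'=\alpha$, using $|\alpha\setminus\gamma|=1$). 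This establishes that the inner union over $\gamma$ is exactly $\{\alpha'\in C_d:\gamma<\alpha',\,\alpha'\neq\alpha\}$.

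Next I would prove the set equality in both directions. For the inclusion $\supseteq$: any element of the right-hand side has the form $\gamma\cup\{x\}$ with $\gamma\in\partial(\alpha)$ and $x\in\pi_\gamma\setminus(\alpha\setminus\gamma)$; by the previous paragraph this is a $d$-cell distinct from $\alpha$ having $\gamma$ as a common face with $\alpha$, so cell neighbourhood condition 1 holds and it lies in $N^d_\alpha$. For the inclusion $\subseteq$: if $\alpha'\in N^d_\alpha$, then by \cref{theo:2implies1} condition 1 holds, so there is $\gamma\in C_{d-1}$ with $\gamma<\alpha,\alpha'$; hence $\gamma\in\partial(\alpha)$, and writing $\alpha'=\gamma\cup\{x\}$ with $x\notin\gamma$, the fact that $\alpha'\in C_d$ gives $x\in\pi_\gamma$ and $\alpha'\neq\alpha$ gives $x\notin\alpha\setminus\gamma$, so $\alpha'$ appears in the term indexed by $\gamma$.

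Finally I would justify that the outer union is in fact \emph{disjoint}, i.e.\ that it is legitimate to write $\bigsqcup$. Here the key observation is that if $\alpha'=\gamma_1\cup\{x_1\}=\gamma_2\cup\{x_2\}$ with $\gamma_1,\gamma_2\in\partial(\alpha)$ both common faces of $\alpha$ and $\alpha'$, then by \cref{theo:boundary_charact} (the characterization via $\alpha\cap\alpha'$) both $\gamma_1$ and $\gamma_2$ equal the unique $(d-1)$-cell $\alpha\cap\alpha'$, so $\gamma_1=\gamma_2$; thus each neighbour $\alpha'$ is produced by exactly one index $\gamma$, which is precisely disjointness of the family. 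The main obstacle — really the only subtle point — is this last disjointness claim: one must invoke the uniqueness of the common face $\alpha\cap\alpha'$ from \cref{theo:boundary_charact} rather than naively assuming two distinct faces of $\alpha$ could glue to the same neighbour. Everything else is a direct bookkeeping argument on finite sets using \cref{theo:C_d} and the definitions of $\partial(\alpha)$, $\pi_\gamma$, and the cell neighbourhood conditions.
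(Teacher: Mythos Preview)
Your proposal is correct and follows essentially the same route as the paper: both directions are obtained by invoking \cref{theo:boundary_charact}/\cref{theo:2implies1} to reduce neighbourhood to a shared $(d-1)$-face, writing $\alpha'=\gamma\cup\{x\}$, and using \cref{theo:C_d} to see that $\alpha'\in C_d$ iff $x\in\pi_\gamma$, with $x\notin\alpha\setminus\gamma$ encoding $\alpha'\neq\alpha$. Your explicit justification of the disjointness of the outer union via the uniqueness of $\gamma=\alpha\cap\alpha'$ from \cref{theo:boundary_charact} is a welcome addition that the paper's own proof leaves implicit.
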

\begin{proof}
    {Let $\alpha'\in N_\alpha$. From \cref{theo:boundary_charact}, there exists $\gamma \in C_{d-1}$ such that $\gamma=\alpha\cap\alpha'$ and in particular, $|\gamma|=d.$ Once that $|\alpha|=|\alpha'|=d+1$ and $\gamma\subset\alpha,\alpha'$, there exists $x\in V\setminus\alpha$ such that we can write $\alpha'=\gamma\cup\{x\}.$ Analogously, there exists $y\in V\setminus\alpha'$ such that we can write $\alpha=\gamma\cup\{y\}.$ It is clear that $\pi_{\gamma}\setminus (\alpha\setminus\gamma)=\pi_{\gamma}\setminus \{y\}$. Suppose this set is not empty. We need to prove that $x\in \pi_{\gamma}\setminus (\alpha\setminus\gamma).$ By absurdity, suppose that $x$ is not in this set. By using set complementary property, we have that $x\in (\pi_\gamma)^\mathsf{c}\cup (\alpha\setminus \gamma).$ If $x \in \alpha\setminus\gamma,$ then $\alpha=\alpha '.$ If $x \in \pi_\gamma ^\mathsf{c},$ then there exists $z \in\gamma$ such that $x\notin \pi_\gamma,$ which implies that $\alpha '$ is not a $d$-face.
    }
%\\

    {
    The opposite inclusion is proven as follows: Let $\alpha '$ be such that $\alpha'=\gamma\cup \{x\},$ for some $x \in \pi_\gamma \setminus (\alpha\setminus \gamma),$ and some $\gamma \in \partial (\alpha).$ Note that $|\alpha\cap \alpha'|=|\alpha \cap(\gamma\cup \{x\})|=|(\alpha\cap\gamma)\cup (\alpha \cap \{x\})|=|\gamma|=d.$ The result follows from applying \cref{theo:boundary_charact} once proven that $\alpha '\in C_d.$  Indeed, $|\alpha '|=|\gamma \cup \{x\}|=d+1.$ Suppose by absurdity that $\alpha ' \notin C_d.$ Than, by using the the result of \cref{theo:C_d}, there exists $z \in \alpha '$ such that $z\notin \pi_{\alpha '\setminus \{z\}}.$ In case of $z =x$, we have that it contradicts that $x \in \pi_\gamma \setminus \{y\}$. Also, if $z\neq x,$ we would have that $z \in \gamma$ and it contradicts that $\alpha\in C_d.$}
\end{proof}
\begin{theorem}\label{theo:transverse_neigh_decision}
    Let $\alpha \in C_d$, $\gamma \in \partial(\alpha)$ and $x\in \pi_{\gamma},\, x\notin \alpha$. Let $\alpha':=\gamma \cap \{x\}$. Then, $\alpha \in T_{\alpha}$ if and only if $x \in \pi_\alpha$.
\end{theorem}
\begin{proof}
Note that \cref{theo:boundary_charact} and \cref{theo:neighb_charact} guarantees that $\alpha'\in N_{\alpha}$ and that $\gamma=\alpha\cap\alpha'$. Having said that, consider $\beta:=\alpha\cup\alpha'=\gamma \cup \{x,y\},$ for some $y\notin \alpha, \, y\neq x$ If $x \in \pi_\alpha,$ then $\beta \in C_{d+1},$ and $\beta>\alpha,\alpha'.$ Otherwise, there exists $x_0 \in \alpha$ such that $x \not \in \pi_{x_0}$ and then $\beta \notin C_{d+1}.$
\end{proof}
\begin{theorem}\label{theo:set_of_transverse}
    Let $\alpha \in C_d$ and $T_\alpha$ as described in \cref{sec:Forman_ricci_curvature}. Then,
\begin{eqnarray}
T_\alpha=\bigsqcup_{\gamma\in\partial(\alpha)}\,\bigcup_{\substack{x\in\pi_\gamma\neq \emptyset \\ x\in \pi_{\alpha}\neq \emptyset\\x\notin\alpha}}\{\gamma\cup\{x\}\}.
\end{eqnarray}
\end{theorem}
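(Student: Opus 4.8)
The plan is to mirror the structure of the proof of \cref{theo:neighb_charact}, since the set of transverse neighbours $T^d_\alpha$ is the subset of $N^d_\alpha$ for which the second cell-neighbourhood condition also holds. By \cref{theo:2implies1} every neighbour already satisfies condition $1.$, so the work is entirely in pinning down when condition $2.$ is additionally met, and translating that into the node-neighbourhood language. I would first recall from the proof of \cref{theo:boundary_charact} that if $\alpha'\in N^d_\alpha$ then $\gamma:=\alpha\cap\alpha'$ is the common $(d-1)$-cell in both boundaries, so we may write $\alpha'=\gamma\cup\{x\}$ with $x\in V\setminus\alpha$ and $\alpha=\gamma\cup\{y\}$ with $\{y\}=\alpha\setminus\gamma$.

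For the forward inclusion, suppose $\alpha'=\gamma\cup\{x\}\in T^d_\alpha$. Then there is a $(d+1)$-cell $\beta$ with $\beta>\alpha,\alpha'$; as computed in \cref{theo:boundary_charact}, $\beta=\gamma\cup\{x,y\}$. Since $\beta\in C_{d+1}$, every pair of its vertices is joined by an edge; in particular $x\sim y$, i.e.\ $x\in\pi_y=\pi_{\alpha\setminus\gamma}$. Combined with $x\in\pi_\gamma\setminus(\alpha\setminus\gamma)$ (from \cref{theo:neighb_charact}) and the fact that $x\notin\alpha$ forces $x\ne y$, we get $x\in\pi_\gamma\cap\pi_{\alpha\setminus\gamma}$, which places $\alpha'$ in the right-hand side. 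For the reverse inclusion, take $\alpha'=\gamma\cup\{x\}$ with $\gamma\in\partial(\alpha)$ and $x\in\pi_\gamma\cap\pi_{\alpha\setminus\gamma}$; note $x\in\pi_{\alpha\setminus\gamma}$ together with simplicity of $G$ gives $x\notin\alpha\setminus\gamma$, and if $x$ were in $\gamma\subset\pi_{\alpha\setminus\gamma}$'s complement\dots\ more cleanly, $x\in\pi_\gamma$ and the remark after \eqref{eq:cell_neighborhood} gives $x\notin\gamma$, so $x\notin\alpha$ and $|\alpha'|=d+1$. One checks $\alpha'\in C_d$ exactly as in \cref{theo:neighb_charact} (any offending vertex of $\alpha'$ would contradict either $x\in\pi_\gamma$ or $\alpha\in C_d$), hence $\alpha'\in N^d_\alpha$ by \cref{theo:boundary_charact}. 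It remains to exhibit the $(d+1)$-cell witnessing condition $2.$: set $\beta=\alpha\cup\{x\}=\gamma\cup\{x,y\}$. Every pair inside $\gamma$ is connected because $\gamma\in C_{d-1}$; each vertex of $\gamma$ is connected to $y$ since $\alpha=\gamma\cup\{y\}\in C_d$; each vertex of $\gamma$ is connected to $x$ since $x\in\pi_\gamma$; and $x\sim y$ since $x\in\pi_{\alpha\setminus\gamma}=\pi_y$. Thus $\beta\in C_{d+1}$ with $\beta>\alpha$ and $\beta>\alpha'$, so $\alpha'\in T^d_\alpha$.

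Finally, I would justify that the outer union over $\gamma\in\partial(\alpha)$ is genuinely disjoint, exactly as in \cref{theo:neighb_charact}: for a fixed neighbour $\alpha'$ the common boundary cell $\gamma=\alpha\cap\alpha'$ is uniquely determined by \cref{theo:boundary_charact}, so no cell is produced from two different $\gamma$'s; this is what licenses writing $\bigsqcup$ rather than $\bigcup$. The main obstacle — and it is a mild one — is the bookkeeping around the identification $\alpha\setminus\gamma=\{y\}$ and ensuring the conditions ``$x\in\pi_\gamma$'', ``$x\in\pi_y$'', ``$x\notin\alpha$'' are each used in the right direction; the geometric content (condition $2.$ $\Longleftrightarrow$ the extra vertex $x$ is adjacent to the leftover vertex $y$ of $\alpha$) is immediate once \cref{theo:boundary_charact} is in hand.
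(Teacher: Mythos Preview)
Your proof is correct and follows essentially the same approach as the paper's: both arguments identify $\gamma=\alpha\cap\alpha'$ via \cref{theo:boundary_charact}, write $\alpha'=\gamma\cup\{x\}$ and $\alpha=\gamma\cup\{y\}$, and reduce transversality to the single adjacency $x\sim y$, then build $\beta=\gamma\cup\{x,y\}$ for the reverse inclusion. Your forward inclusion is phrased directly rather than by contradiction, and you additionally justify the disjointness of the outer union, but these are presentational differences only.
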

\begin{proof}
It follows as a corollary from \cref{theo:transverse_neigh_decision} and the characterization of the neighbourhood of $\alpha$ in \cref{theo:neighb_charact}.
    
\end{proof}
\begin{theorem}\label{theo:set_of_parallel}
    Let $P_{\alpha}$ as defined in \cref{sec:Forman_ricci_curvature}. Then,
    \begin{equation}
    P_{\alpha}=\bigsqcup_{\gamma\in\partial(\alpha)}\,\bigcup_{\substack{x\in\pi_\gamma\neq \emptyset \\ x\notin \pi_{\alpha}\neq \emptyset\\x\notin\alpha}}\{\gamma\cup\{x\}\}.
\end{equation}
\end{theorem}

\begin{proof}
It also follows as a corollary from \cref{theo:transverse_neigh_decision} and the characterization of \cref{theo:neighb_charact}.
\end{proof}
%\begin{lemma}
%Test Lemma.
%\end{lemma}
\begin{theorem}\label{theo:set_of_highn}
    Let $H_\alpha$ as defined in \cref{sec:Forman_ricci_curvature}. Then,
    \begin{eqnarray}
   H_{\alpha}=\bigcup_{x \in \pi_\alpha \neq \emptyset } \{\alpha\cup\{x\} \}.
\end{eqnarray}
\end{theorem}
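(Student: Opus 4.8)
The plan is a double inclusion, exploiting the fact that any $(d+1)$-cell having $\alpha$ in its boundary must, purely by cardinality, be of the form $\alpha\cup\{x\}$ for a single extra vertex $x$, together with the set-theoretic membership criterion for cells from \cref{theo:C_d} to pin down exactly which vertices $x$ produce a genuine $(d+1)$-cell.

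For the inclusion $H^d_\alpha\subseteq\bigcup_{x\in\pi_\alpha}\{\alpha\cup\{x\}\}$, I would take $\beta\in H^d_\alpha$, so $\beta\in C_{d+1}$ and $\beta>\alpha$; then $\alpha\subset\beta$ with $|\beta|=|\alpha|+1$, hence $\beta=\alpha\cup\{x\}$ for some $x\notin\alpha$. Since $\beta$ is a complete subgraph, $x$ is joined by an edge to every node of $\alpha$, that is $x\in\pi_y$ for all $y\in\alpha$, so $x\in\pi_\alpha$; in particular $\pi_\alpha\neq\emptyset$, and $\beta$ lies in the right-hand side. (If $\pi_\alpha=\emptyset$ the same reasoning shows $H^d_\alpha=\emptyset$, so both sides are empty and there is nothing to prove.)

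For the reverse inclusion I would fix $x\in\pi_\alpha$, set $\beta:=\alpha\cup\{x\}$, and verify $\beta\in H^d_\alpha$. Since the graph is simple, $x\in\pi_\alpha$ forces $x\notin\alpha$, so $|\beta|=d+2$, and $\alpha\subset\beta$ with $|\beta|=|\alpha|+1$; thus it only remains to check $\beta\in C_{d+1}$, which by \cref{theo:C_d} amounts to $z\in\pi_{\beta\setminus\{z\}}$ for every $z\in\beta$. For $z=x$ this is exactly the hypothesis $x\in\pi_\alpha$. For $z\in\alpha$, write $\beta\setminus\{z\}=(\alpha\setminus\{z\})\cup\{x\}$, so $\pi_{\beta\setminus\{z\}}=\pi_{\alpha\setminus\{z\}}\cap\pi_x$; here $z\in\pi_{\alpha\setminus\{z\}}$ because $\alpha\in C_d$ (again \cref{theo:C_d}), and $z\in\pi_x$ because $x\in\pi_\alpha\subseteq\pi_z$ and adjacency is symmetric in a simple graph. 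Hence $\beta\in C_{d+1}$, and therefore $\alpha<\beta$, i.e. $\beta\in H^d_\alpha$.

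All steps are elementary set and adjacency manipulations; the one point that needs a moment's care is the reverse inclusion, where one must confirm that $\alpha\cup\{x\}$ is genuinely a complete subgraph and not merely a $(d+2)$-element vertex set — and this is precisely where the cell criterion of \cref{theo:C_d} and the symmetry of adjacency are used.
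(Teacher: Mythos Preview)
Your proof is correct and follows the same double-inclusion strategy as the paper. In fact, your treatment of the reverse inclusion is more careful than the paper's: where the paper simply asserts that $\beta=\alpha\cup\{x\}\in C_{d+1}$ is ``immediate, once that $|\beta|=d+2$'', you explicitly verify the cell-membership criterion of \cref{theo:C_d} for every $z\in\beta$, separating the cases $z=x$ and $z\in\alpha$ and invoking symmetry of adjacency --- which is exactly the content that makes the step rigorous.
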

\begin{proof}
    Let $\beta \in H_\alpha.$ In particular, $\beta \supset \alpha$ implies that exists $x \notin \alpha$ such that we can write $\beta=\alpha \cup \{x\}.$ It is clear that $x\in \pi_\alpha,$ otherwise  $\beta$ would not be a $(d+1)$-face.
    Let $\beta$ be a set of the form $\beta=\alpha \cup \{x\},$ for some $x \in \pi_\alpha.$ It is enough to show that $\beta \in C_{d+1}$, which is immediate once $\left|\beta\right|=d+2.$
\end{proof}
\begin{theorem}\label{theo:T_cardinality}
Let $\alpha \in C_d$. Then,
    \begin{equation}
        \left|T_{\alpha}\right| = (d+1) \left|H_{\alpha}\right|.
    \end{equation}
\end{theorem}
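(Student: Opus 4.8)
The plan is to set up a bijection-counting argument using the explicit characterizations from Theorems~\ref{theo:set_of_transverse} and~\ref{theo:set_of_highn}. Write $\alpha=\{x_0,\dots,x_d\}$, so that $\partial(\alpha)=\{\gamma_0,\dots,\gamma_d\}$ where $\gamma_i=\alpha\setminus\{x_i\}$, and recall $\pi_\alpha=\pi_{\gamma_i}\cap\pi_{\alpha\setminus\gamma_i}=\pi_{\gamma_i}\cap\pi_{x_i}$ for every $i$. First I would use \cref{theo:set_of_highn} to observe that $x\mapsto\alpha\cup\{x\}$ is a bijection from $\pi_\alpha$ onto $H^d_\alpha$, so $|H^d_\alpha|=|\pi_\alpha|$. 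Next, by \cref{theo:set_of_transverse}, $T^d_\alpha$ is a \emph{disjoint} union over $\gamma\in\partial(\alpha)$ of the sets $\{\gamma\cup\{x\}:x\in\pi_\gamma\cap\pi_{\alpha\setminus\gamma}\}$; since for $\gamma=\gamma_i$ we have $\pi_{\gamma_i}\cap\pi_{\alpha\setminus\gamma_i}=\pi_{x_i}\cap\pi_{\gamma_i}=\pi_\alpha$, each of these $d+1$ blocks is in bijection (via $x\mapsto\gamma_i\cup\{x\}$) with $\pi_\alpha$, giving $|T^d_\alpha|=\sum_{i=0}^d|\pi_\alpha|=(d+1)|\pi_\alpha|=(d+1)|H^d_\alpha|$.

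Two small points need care. The first is verifying that the inner map $x\mapsto\gamma_i\cup\{x\}$ is injective for each fixed $i$: this holds because $x\in\pi_\alpha$ forces $x\notin\alpha\supseteq\gamma_i$ (by the Remark following \eqref{eq:cell_neighborhood}), so $|\gamma_i\cup\{x\}|=d+1$ and $x$ is recoverable as the unique element of $(\gamma_i\cup\{x\})\setminus\gamma_i$. The second is that the outer union in \eqref{eq:T_set_theory} is genuinely disjoint, which is exactly the content already asserted by \cref{theo:set_of_transverse} (different $\gamma_i$ yield $d$-cells with different intersection $\gamma_i$ against $\alpha$, hence no overlap); I would simply cite that rather than re-prove it. Together these ensure the cardinality of the disjoint union is the sum of the $d+1$ equal cardinalities $|\pi_\alpha|$.

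I would then close by combining the two displays: $|T^d_\alpha|=(d+1)|\pi_\alpha|=(d+1)|H^d_\alpha|$. The main (and only real) obstacle is bookkeeping the index set correctly---making sure that the identification $\pi_{\gamma_i}\cap\pi_{\alpha\setminus\gamma_i}=\pi_\alpha$ is used uniformly across all $d+1$ boundary faces, so that every block of the disjoint union contributes the same count---but there is no substantive difficulty, as the structural work has been done in the preceding theorems.
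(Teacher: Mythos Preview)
Your proposal is correct and follows essentially the same approach as the paper: both arguments hinge on the identity $\pi_\alpha=\pi_{\gamma}\cap\pi_{\alpha\setminus\gamma}$ for every $\gamma\in\partial(\alpha)$, then read off the cardinalities from \cref{theo:set_of_transverse} and \cref{theo:set_of_highn} to obtain $|T^d_\alpha|=\sum_{\gamma\in\partial(\alpha)}|\pi_\alpha|=(d+1)|\pi_\alpha|=(d+1)|H^d_\alpha|$. Your version is somewhat more explicit about the injectivity of $x\mapsto\gamma_i\cup\{x\}$ and the disjointness of the outer union, but these are bookkeeping details the paper's proof leaves implicit; there is no substantive difference in strategy.
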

\begin{proof}
    Note that $\pi_\alpha=
    %\pi_{\gamma\setminus(\alpha\setminus\gamma)}
    \pi_{\gamma}\cap\pi_{\alpha\setminus\gamma}$, for all $\gamma \in \partial(\alpha).$
    Thus, %from the previous theorem,
    by taking the cardinality of $T_\alpha$ and $H_\alpha$ using the equalities in \cref{theo:set_of_transverse} and \cref{theo:set_of_highn}, we obtain
    \begin{eqnarray}
        \left|T_{\alpha}\right|=\sum_{\gamma\in\partial(\alpha)}\left|\pi_{\alpha}\right|=(d+1)\left|\pi_\alpha\right|=(d+1)\left|H_{\alpha}\right|.
    \end{eqnarray}
\end{proof}
\textcolor{black}{
\begin{theorem}\label{theo:FRC_range}
    Let $G=(V,E)$. For all $d\geq 1$ and $\alpha\in C_d$, the FRC is bounded by
    \begin{eqnarray}\label{eq:FRC_range}
        2\cdot(d+1)-|V|\leq\mathrm{F}_d(\alpha)\leq |V|
    \end{eqnarray}
\end{theorem}}

\begin{proof}
\textcolor{black}{
    From \eqref{eq:all_cells} and \eqref{eq:forman_d_cells}, it is clear that the minimum of $F_d(\alpha)$ is reached when all neighbours of $\alpha$ are parallel. Analogously, the maximum is reached when all neighbours of $\alpha$ are transverse. Additionally, the maximum number of neighbours of a $d$-face is $|V|-(d+1)$. It follows that the lower and upper bounds of $F_d(\alpha)$ are $2\cdot (d+1)-|V|$ and $|V|$, respectively.
   }
\end{proof}
\begin{theorem}
    Let $d\geq 1$ and $\alpha\in C_d.$ Let also $N_\alpha$ be the set of neighbours of $\alpha$. The FRC is bounded by
    \begin{eqnarray}\label{eq:FRC_neigh_range}
   (d+1)-|N_\alpha| \leq\mathrm{F}_d(\alpha)\leq \left\lfloor\frac{|N_\alpha|}{(d+1)}\right\rfloor+(d+1)
    \end{eqnarray}
    where $\lfloor \, .\, \rfloor$ is the floor of a number.
\end{theorem}
\begin{proof} 
%\eqref{eq:all_neigh_formula},  \eqref{eq:all_cells} and \eqref{eq:forman_d_cells}
From equation \eqref{eq:FRC_algorithm_v1}, we know that $F_{d}(\alpha) = \frac{|T_\alpha|}{(d+1)} + (d+1) - |P_\alpha|$. Also, recall that the FRC reaches its minimum when all neighbours of $\alpha$ are parallel, i.e., when $P_\alpha = N_\alpha$, and then, $F_{d}(\alpha) = (d+1) - |N_\alpha|$. On the other hand, the FRC reaches its maximum when all neighbours of $\alpha$ are transverse, $T_\alpha = N_\alpha$, where in this case $|N_\alpha|$ is a multiple of $(d+1)$, leading us to $F_{d}(\alpha) = \frac{|N_\alpha|}{(d+1)} + (d+1)$. Otherwise, $|T_\alpha| < |N_\alpha|$ and $F_{d}(\alpha) \leq \left\lfloor \frac{|N_\alpha|}{(d+1)} \right\rfloor + (d+1)$. In the special case where $|N_\alpha| = |V|-(d+1)$ and $N_\alpha = T_\alpha$ (or $N_\alpha = P_\alpha$), the upper (or lower) bound is the same as in \cref{theo:FRC_range}.
\end{proof}
\section*{Acknowledgements}
The authors would like to acknowledge Rodrigo A. Moreira at the Basque Center for Applied Mathematics for his critical review.
This research is supported by the Basque Government through the BERC 2022-2025 program and by the Ministry of Science and Innovation: BCAM Severo Ochoa accreditation CEX2021-001142-S / MICIN / AEI / 10.13039/501100011033. Moreover, the authors acknowledge the financial support received from the IKUR Strategy under the collaboration agreement between the Ikerbasque Foundation and BCAM on behalf of the Department of Education of the Basque Government. SR further acknowledges the RTI2018-093860-B-C21 funded by (AEI/FEDER, UE) and the acronym ``MathNEURO''. We also acknowledge the Elkartek 2023 via grant ONBODY no. KK-2023/00070.
\pagebreak
\bibliographystyle{plain}
\bibliography{references_resub}

\end{document}